\newtheorem{thm}{Theorem}[section]
\newtheorem{theorem}[thm]{Theorem}
\newtheorem{corollary}[thm]{Corollary}
\newtheorem{proposition}[thm]{Proposition}
\theoremstyle{definition}
\newtheorem{definition}[thm]{Definition}
\newtheorem{example}[thm]{Example}
\newtheorem{observation}[thm]{Observation}
\newcommand{\bE}{\mathbf{E}}
\newcommand{\cH}{\mathcal{H}}
\newcommand{\R}{\mathbf{R}}
\newcommand{\C}{\mathbf{C}}
\newcommand{\bP}{\mathbf{P}}
\newcommand{\al}{\alpha}
\newcommand{\be}{\beta}
\newcommand{\ga}{\gamma}
\newcommand{\de}{\delta}
\newcommand{\beq}{\begin{equation}}
	\newcommand{\eeq}{\end{equation}}
\newcommand{\lra}{\longrightarrow}
\newcommand{\KL}{\mathrm{KL}}
\begin{document}

\centerline{\Large
\bf The Geometry of Quantum Computing}

\medskip

\centerline{\large{{E. Ercolessi$,^{\! *, \dagger}$ R. Fioresi$,^{\!\star,\dagger}$ 
T. Weber$^{\#,\dagger}$}}
\footnote{This research was supported by Gnsaga-Indam, by
COST Action CaLISTA CA21109, HORIZON-MSCA-2022-SE-01-01 CaLIGOLA,
PNRR MNESYS, PNRR National Center for HPC, Big Data and Quantum Computing. }}

\vskip 0.5 cm
\centerline{$^*${\sl Dipartimento di Fisica e Astronomia,
Universit\`a di Bologna}}

\centerline{{\sl  
Via Irnerio 46 - 40126, Bologna, Italy}}

\centerline{{elisa.ercolessi@unibo.it}}

\bigskip

\centerline{$^\star${\sl FaBiT, Universit\`{a} di
Bologna}}

\centerline{\sl via S. Donato 15, I-40126 Bologna, Italy}

\centerline{{rita.fioresi@unibo.it}}

\bigskip

\centerline{
$^\#${\sl Dipartimento di Matematica, Universit\`{a} di
Bologna}}

\centerline{\sl Piazza di Porta S. Donato 5, I-40126 Bologna, Italy}

\centerline{{thomasmartin.weber@unibo.it}}

\bigskip

\centerline{$^\dagger${\sl Istituto Nazionale di Fisica Nucleare}}

\centerline{\sl Sezione di Bologna,  I-40126 Bologna, Italy}

\begin{abstract}
In this expository paper we present a brief introduction to
the geometrical modeling of some quantum computing problems.
After a brief introduction to establish the terminology, we
focus on quantum information geometry and $ZX$-calculus,
establishing a connection between quantum computing questions and
quantum groups, i.e. Hopf algebras.
\end{abstract}

\section{Introduction}

The idea of using the laws of quantum mechanics for a new approach
to computer algorithms is due to R. Feyman, who introduced the concept
of quantum computing in a seminal talk in 1981 \cite{feynman}.
Feyman himself, a few years later \cite{feynman1987},
analyzed the concept of universal
quantum computer built from the simplest quantum mechanical system
(a two-level system or qubit) and making use of elementary
quantum gates. So Feynman, for the first time, introduced the idea
of a computer, functioning according to {\sl non-binary} i.e.
{\sl quantum} logic.
Feyman also gave a graphical description of the quantum logic gates,
a notation we still use today, as many of Feyman extraordinary ones.
Around the same time, Yu. Manin \cite{manin} suggested a quantum approach
to information theory from a mathematical point of view.
Later on Manin enhanced the theory of quantum groups with a new geometric approach \cite{manin2}.
In 1985 Deutsch \cite{deutsch} formalized the notion of quantum computer
and was the first to raise the question of ``quantum advantage'', which later
on inspired Shor towards the formulation of a quantum algorithm
for factoring large numbers more efficient than the classical ones \cite{shor1, shor2}.

\medskip
We are unable to properly account for the many discoverings in this
fast developing subject, we send the reader to the exhaustive treatments
by Preskill \cite{preskill1, preskill2, preskill3}. 

\medskip
In the present note
we want to show how quantum computing, quantum information geometry and quantum
groups, in the form of Hopf algebras, can be related, providing new models for interesting
quantum computing questions. Moreover, our elementary treatment
of the topics, hopefully will allow experts in different fields,
to share their knowledge towards a more unified and sound
geometric theory of quantum computing.

\medskip
Our paper is organized as follows.

\medskip
In Sec. \ref{overview-sec}, we establish the terminology
introducing the notions of {\sl qubit}, {\sl density operator}
in \ref{qbit-sec},
and {\sl quantum logic gate} in \ref{qlog-sec},
that will be key for our subsequent
treatment.

\medskip
In Sec. \ref{info-sec}, we first introduce basic concepts of
information geometry, as the {\sl Fisher matrix} in
\ref{fisher-sec}, and then we explore
their quantum information counterparts in \ref{qfisher-sec},
leading to the {\sl Quantum
Geometric Tensor} in \ref{qgt-sec},
a natural K\"ahler metric on the space of qubits.

\medskip
Finally in Sec. \ref{zx-sec}, we discuss $ZX$-calculus as an effective
method to describe quantum logic gates and quantum circuits, making
use of the Hopf algebra language and leading to a
surprising connection with the quantum group theory.

\medskip
Our goal is to introduce mathematicians and physicists to some theoretical
questions arising from the quantum computing world. 
We are convinced that geometric and theoretical physical modeling
can greatly help this fast developing field.

\section{An overview on Quantum computing}\label{overview-sec}


In this section we give the basics of quantum
computing theory, establishing a dictionary, which will enable us to
state some interesting mathematical questions 
that we will discuss in the next sections.
Our treatment is very terse, we {refer}
the interested read to \cite{preskill3} (and refs. therein)
and to \cite{as}. 

\subsection{The Qubit}\label{qbit-sec}

In a classical computer, the fundamental unit of information is
the \textit{bit}, which can assume either 0 or 1 value.
In quantum computing, the corresponding notion is the quantum bit or, in short, \textit{qubit}.
A qubit differs from a classical bit since its state is represented by any  
\textit{superposition} of two independent  states that correspond to the classical 0 and 1. Mathematically,
we represent {the state $|\psi\rangle$ of a qubit as the linear combination 
of two basis states $|0\rangle$ and $|1\rangle$, the \textit{computational basis}, 
that we identify with the canonical basis of $\C^2$:
\beq\label{qubit}
|\psi\rangle = \al |0\rangle +\be |1\rangle, \qquad |\al|^2+|\be|^2 =1,
\,\, \al,\be \in \C
\eeq
together with a {\sl phase condition} that we detail below.
Hence $|\psi\rangle$ is an element of the Hilbert space $\cH=\C^2$, {of unit norm. This condition
follows from the probabilistic interpretation} of the coefficients $\al$ and $\be$:
the result of a measurement { in the computational basis} finds the qubit in the state
$|0\rangle$ or in the state $|1\rangle$ with probability $|\al|^2$ or $|\be|^2$. {Notice that this interpretation 
does not distinguish between the states $|\psi\rangle$ and $e^{i \gamma} |\psi\rangle$, for any phase $\gamma$.}

\medskip
We write the qubit $|\psi\rangle$ using the {Dirac} ket notation, as customary in quantum mechanics 
to represent a quantum mechanical state.


Since we pursue a 
geometrical interpretation, we also think
of $|\psi\rangle$ as a {complex} {\sl line} in the vector space $\cH$.
Hence in eq. (\ref{qubit}), 
we need to choose a representative of such line by looking at the
intersections of the line with the sphere (the condition  $|\al|^2+|\be|^2 =1$)
and then identifying {all points that differ just by a phase factor $e^{i \gamma}$.}

\medskip
For this reason, we represent the qubit $|\psi\rangle$ parametrically as
\beq\label{phase-eq}
|\psi\rangle = e^{i\ga} \left[ \cos\left(\frac{\theta}{2}\right)|0\rangle +
 e^{i\varphi} \sin\left(\frac{\theta}{2}\right)|1\rangle\right]  
\eeq
where we can set  the parameter $\ga=0$ (phase) to account for the identification
{described above}. 
So a qubit $|\psi\rangle$ is also equivalently given by
the two parameters $(\varphi, \theta)$, 
or by eq. (\ref{phase-eq}),
together with the phase condition $\ga=0$.

We can then view a
qubit as an element of a 2 dimensional (real) sphere in $\R^3$,
called  the \textit{Bloch sphere}, as in Fig. \ref{bloch}. 
Mathematically this is the
{\it Riemann sphere}, and it is identified with $\bP^1(\C)$
the projective line, consisting of lines in $\C^2$.
Notice that, conventionally, the basis vectors $|0\rangle$ and $|1\rangle$
are respectively the north and south pole of the Bloch sphere.

\begin{figure}[h!]
\begin{center}
\includegraphics[width=3in]{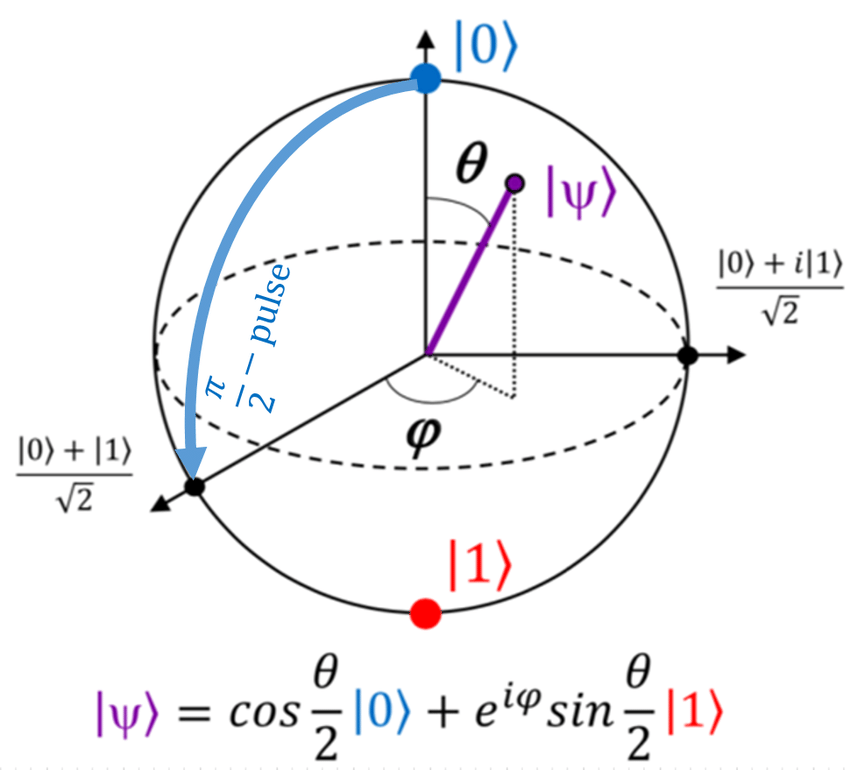}
\caption{The Bloch Sphere}\label{bloch}
\end{center}
\end{figure}

Later on, when we introduce quantum information geometry,
we will equip the Bloch sphere with a natural metric,
the \textit{Fubini-Study metric}, coming from its identification
with the complex projective line (Riemann Sphere).

\medskip
When we measure the qubit, we have that its state $|\psi\rangle$ collapses into
one of the basis states  $|0\rangle$,  $|1\rangle$ with a certain
probability. The problem with quantum computing, in comparison with
classical computing, is that we cannot measure a state with arbitrary accurate precision, since the act of measuring will affect the state
itself. As a consequence of this quantum mechanical phenomenon,
the quantum algorithms need to be suitably adapted and are not
simple generalizations of the classical ones. Moreover, they are
more prone to errors, and the correction of errors leads to
\textit{fault tolerant algorithms} that we mention later on
in our note, see also \cite{kitaev1, kitaev2, dklp}.

\medskip
Besides the above mentioned single or $1$ qubit (\ref{qubit}),
we may have  {a $2$ qubit system, whose states} are elements 
of $\C^2 \otimes \C^2$. 
We represent a $2$ qubit system in  the basis:
$$
|00\rangle, \quad |01\rangle, \quad |10\rangle, \quad |11\rangle
$$
called again the \textit{computational basis}, 
where we use the notation:
\beq\label{basis}
|ij\rangle := |i\rangle \otimes |j\rangle
\eeq


More in general, we can speak of  {an $n$ qubit system} 
in a similar fashion: {states are expressed as linear
superposition (combination) of the computational basis states: 
\beq\label{nbasis}
|00\dots 00\rangle,  \quad |00\dots 01 \rangle\quad |00\dots 10 \rangle, \quad
\dots \quad |11\dots11\rangle
\eeq
Notice that the $2^n$ states of the computational basis are ordered
from $00...0$ to $11\dots1$ according to the binary representation of the
integer $i$, $1=0,1,\dots, 2^n-1$} enumerating them. 
The space
of $n$ qubits is $\cH:=\C^2 \otimes \dots \otimes \C^2$
($n$ tensor product). Hence a state of {$n$ qubits} is an element of the Hilbert space
$\cH \cong \C^N$, for $N=2^n$
$$
|\psi\rangle=\al_0|00\dots 0\rangle+\al_1 |00\dots 1 \rangle + \dots
+\al_{2^n-1}|11\dots1\rangle, \quad \sum_{i=0}^{2^n-1} |\al_i|^2=1
$$
The last condition takes into account the above mentioned fact
about measurements in quantum mechanics: after
a measurement, the state $|\psi\rangle$ is found in the $i$th state of the computational basis 
with probability $|\al_i|^2$.
We should also take into account the phase condition, 
but we do not write such condition explicitly here.

A vector $\psi\in \C^2 \otimes \dots \otimes \C^2$  describing the state of $n$ qubits is said to be 
 \textit{entangled} if it is not an indecomposable tensor, that is if we cannot write
it as $|\psi\rangle=u_1 \otimes \dots \otimes u_n$, with $u_i \in \C^2$.


\medskip
The states considered up to this points are called {\it pure} states, to be distinguished from the case
in which a system of $n$ qubits can be described by a {\it mixed} state, i.e. by a statistical mixture  
defined by an ensemble of states $\{ |\psi_j\rangle\}_{j=1}^m$
that can occur with a probability $p_j$, with 
$0\leq p_j\leq 1$ and $\sum _{j=1}^m p_j=1$.
Clearly, a pure state is a particular example of a mixed
state with $p_1=1, \; p_2=\dots=p_m=0$.

Both pure and mixed states can be represented by means of the so-called
{\it density operator}
\beq\label{rho-mixed}
\rho =\sum _{s}p_{s}|\psi _{s}\rangle \langle \psi _{s}|
\eeq
where $\langle \psi _{s}| \in \cH^*$ is the dual vector of $|\psi _{s}\rangle$.

It is immediate to verify that $\rho$ is a self-adjoint, (semi)positive definite operator with  unit trace.
Also, one can prove 
that it is idempotent (hence it is a projection
operator) if and only if the state is pure. 

Once an ON (orthonormal) basis is fixed, we can express $\rho$ as a matrix
and speak of \textit{density matrix}. With an abuse of notation we shall use
$\rho_\psi$ to refer to both density matrix and operator.

For example, for a $1$ qubit system in the pure state $|\psi\rangle=\al|0\rangle + \be|1\rangle$,  
we have
in the computational basis:
$$
\rho_\psi=\begin{pmatrix} \al \\ \be \end{pmatrix}
\begin{pmatrix} \overline{\al} & \overline{\be} \end{pmatrix}=
\begin{pmatrix} |\al|^2 & \al \overline{\be} \\ \overline{\al}\be & |\be|^2
\end{pmatrix}
$$
while the mixed state defined by the ensemble $\{|0\rangle, |1\rangle\}$ with $p_0=|\alpha|^2, p_1=|\beta|^2$ we have:
$$
\rho_\psi=\begin{pmatrix} \al \\ \be \end{pmatrix}
\begin{pmatrix} \overline{\al} & \overline{\be} \end{pmatrix}=
\begin{pmatrix} |\al|^2 & 0 \\ 0 & |\be|^2
\end{pmatrix}
$$

\subsection{Quantum Logic Gates} \label{qlog-sec} 

In classical computing we have logic gates operating on bits.
For example, the classical logic gate ``NOT'' operates as follows:
\beq\
\begin{array}{c}
0 \mapsto 1\\
1 \mapsto 0
\end{array}
\eeq

{Quantum logic gates operate on
states which are linear combinations of the computational basis elements,
as in (\ref{basis}) for $1$ qubit and in (\ref{nbasis}) for $n$ qubits. In order 
to preserve probabilities, quantum logic gates are represented by unitary matrices $U$, i.e.
$U^\dagger U=I$.}

Let us see examples of logic gates for single qubits; these examples
will be especially relevant for the $ZX$ calculus that we will
treat more in detail in the last section.
We can take advantage of the Pauli matrices $X$, $Y$, $Z$ and the operator 
$H$ called the {\it Hadamard operator}:
$$
X=\begin{pmatrix} 0 & 1 \\ 1 & 0 \end{pmatrix}, \qquad
Y=\begin{pmatrix} 0 & -i \\ i & 0 \end{pmatrix}, \qquad
Z=\begin{pmatrix} 1 & 0 \\ 0 & -1 \end{pmatrix}, \qquad
H=\frac{1}{\sqrt{2}}\begin{pmatrix} 1 & 1 \\ 1 & -1 \end{pmatrix}.
$$
The Hadamard operator transforms the computational basis $|0\rangle$,  $|1\rangle$ 
{(which are the eigenvectors of the $Z$ operator) into the {\it Hadamard 
basis} corresponding to the eigenvectors of the $X$ operator. The Hadamard basis is denoted by}
$$
|+\rangle:=H |0\rangle = \frac{1}{\sqrt{2}}\left(|0\rangle + |1\rangle\right), \qquad
|-\rangle:=H |1\rangle = \frac{1}{\sqrt{2}}\left(|0\rangle - |1\rangle\right)
$$
These operators are also given graphically in Fig. \ref{qgates}.

\begin{figure}[h!]
\begin{center}
\includegraphics[width=2.5in]{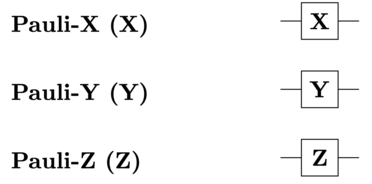}
\caption{Quantum logic gates for $1$ qubit}\label{qgates}
\end{center}
\end{figure}


We now want to discuss an example of a quantum logic gate for a $2$ qubit system: the CNOT,
{\it control not}. The unitary transformation expressing the CNOT is given
in the computational basis by the matrix:
$$
U_{\mathrm{CNOT}}=\begin{pmatrix} 1 & 0 & 0 & 0 \\0 & 1 & 0 & 0 \\0 & 0 & 0 & 1 \\0 & 0 & 1 & 0
\end{pmatrix}
\qquad
\begin{array}{c}
|00\rangle \mapsto |00\rangle \\
|01\rangle \mapsto |01\rangle \\
|10\rangle \mapsto |11\rangle \\
|11\rangle \mapsto |10\rangle 
\end{array}
$$
and is graphically represented as in Fig. \ref{cnot}.

\begin{figure}[h!]
\begin{center}
\includegraphics[width=3in]{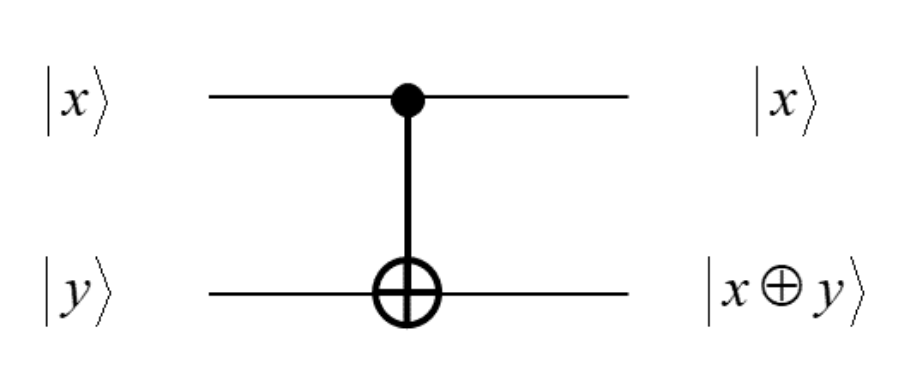}
\caption{CNOT logic gate}\label{cnot}
\end{center}
\end{figure}

This transformation is called ``control not'', since it reverses the computational
basis state in Fig. \ref{cnot} represented by  $|y\rangle$ (i.e. $y=0,1$)
if only if the {\sl control state} 
$|x\rangle=|1\rangle $. 
Notice that the sum $x \oplus y$ means sum modulo 2.

\medskip
Despite the specificity of the quantum gates described above, we have a beautiful
theoretical result, \cite{shor3}, showing that actually single qubit gates and CNOT
is all we need to construct the most general quantum logic gates.

\begin{theorem}
The set of gates that consists of all 1 qubit quantum gates (namely $U(2)$ operators)
and the 2 qubit gate CNOT is universal in the sense that all unitary operations on 
$n$ qubits (namely $U(2^n)$) can be expressed as compositions of these gates.
\end{theorem}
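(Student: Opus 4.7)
The plan is to establish universality in three stages, following the classical argument essentially due to Barenco et al. and foreshadowed by Shor: first reduce an arbitrary $U\in U(2^n)$ to a product of \emph{two-level} unitaries, then reduce each two-level unitary to a singly-controlled one-qubit unitary, and finally implement a controlled one-qubit unitary using only CNOT and elements of $U(2)$.

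First I would show that every $U\in U(2^{n})$ can be written as a finite product of two-level unitaries, i.e.\ unitaries acting as the identity on all but two vectors of the computational basis. The idea is a Gauss/QR-style elimination on $U$ viewed as a $2^{n}\times 2^{n}$ matrix: for each column, multiply on the left by a sequence of two-level unitaries (each essentially a $2\times 2$ rotation embedded in $U(2^{n})$) that zero all entries below the diagonal. Since the product of $U$ with these two-level unitaries becomes the identity, inverting gives a representation of $U$ as the product of at most $2^{n-1}(2^{n}-1)$ two-level unitaries. Each two-level factor is determined by a pair of basis indices $|s\rangle,|t\rangle$ and a matrix in $U(2)$.

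Next I would convert a two-level unitary $\tilde U$ acting nontrivially on $|s\rangle,|t\rangle$ into a one-qubit gate controlled on the remaining $n-1$ qubits. The trick is a \emph{Gray code} sequence of basis states $s=g_{0},g_{1},\dots,g_{k}=t$ with consecutive elements differing in exactly one bit. Applying CNOTs (or multi-controlled NOTs) conjugating along the Gray code permutes the basis so that $|s\rangle$ and $|t\rangle$ become two basis states differing only in the last qubit; in this new frame $\tilde U$ is literally an $(n-1)$-fold controlled one-qubit unitary acting on the last qubit. Since the Gray code conjugation is built from NOTs and CNOTs, and since (as I address below) multi-controlled gates are themselves reducible to CNOT plus one-qubit gates, this step preserves the claimed gate set.

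Finally I would implement a controlled one-qubit unitary using the ABC decomposition: for any $U\in U(2)$ there exist $A,B,C\in U(2)$ and a phase $\alpha$ with $ABC=I$ and $U=e^{i\alpha}AXBXC$; replacing each $X$ by a CNOT (with the control wire on the ancillary qubit) and inserting the phase $e^{i\alpha}$ as a controlled phase on the control qubit gives a circuit for controlled-$U$ built entirely from CNOTs and single-qubit gates. To promote this to an $(n-1)$-fold controlled unitary, one inductively synthesises Toffoli-like gates (three-qubit controlled-controlled-NOT) from CNOTs and one-qubit gates, and then uses a cascade of Toffolis, possibly with a clean ancilla borrowed from the unused qubits, to mimic multiple controls; combined with the ABC trick this yields the required multi-controlled $U$. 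The hardest step, and the one most prone to hidden combinatorial work, is this last synthesis of multi-controlled gates: the two-level reduction and the ABC lemma are almost linear algebra, but bookkeeping the Gray-code conjugations together with ancilla-free Toffoli constructions is where the proof really has to do technical labour.
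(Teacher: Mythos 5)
The paper gives no proof of this theorem; it simply cites the result from Barenco et al.\ (reference \cite{shor3}), and your outline is precisely the standard argument of that paper: reduction to two-level unitaries, Gray-code conjugation to a multi-controlled single-qubit gate, the $ABC$ decomposition for controlled-$U$, and Toffoli synthesis for multiple controls. Your sketch is correct and matches the canonical proof strategy of the cited source, with the honest caveat (which you flag yourself) that the multi-controlled gate synthesis is where the real combinatorial work lies.
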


This theorem is very significant, since it shows the universality of the set consisting of unitary 
single qubit operators and the CNOT: from this set we can then get the operators to build any
quantum algorithm. Thus, this
is the counterpart of the classical observation that with classical NAND gates, we can realize all logic functions: 
AND, OR, NOT { 
and therefore any classical algorithm}. 

\medskip

Still, it is not clear how (and if!) a quantum computer can solve any classical computability question;
essentially the obstruction comes from two main reasons.
First, because 
the classical NAND is a non-reversible gate, while quantum gates are necessarily unitary and hence invertible.
Then, we have the
so-called {\it No-cloning theorem}, that forbids the existence of a quantum gate
that allows to copy a generic state of a qubit.

\begin{theorem} {\bf (No-cloning theorem).}
There is no unitary operator $U$ on $\cH \otimes \cH$ such that for all pure
states $|\phi \rangle$ and $|B\rangle$ in $\cH$, we have:
$$
U(|\phi \rangle \otimes |B\rangle)=e^{i\alpha (\phi,B)}|\phi \rangle\otimes |\phi \rangle
$$
for some real number $\alpha$ depending on $\phi$ and $B$.
\end{theorem}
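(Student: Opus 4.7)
The plan is a short contradiction argument that exploits the fact that unitary operators preserve inner products, combined with a suitable choice of two non-orthogonal, non-parallel states.

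First I would fix a unit ``blank'' state $|B\rangle \in \cH$ with $\langle B|B\rangle = 1$ and pick two unit pure states $|\phi\rangle,|\psi\rangle \in \cH$ which are linearly independent but non-orthogonal, so that $c := \langle \phi|\psi\rangle$ satisfies $0 < |c| < 1$; for instance in $\cH = \C^2$ one can take $|\phi\rangle = |0\rangle$ and $|\psi\rangle = |+\rangle$, yielding $c = 1/\sqrt{2}$. Next I would compute $\langle U(\phi\otimes B)\,|\,U(\psi\otimes B)\rangle$ in two ways. Unitarity of $U$ together with normalization of $|B\rangle$ gives
$$\langle U(\phi\otimes B)\,|\,U(\psi\otimes B)\rangle = \langle \phi\otimes B\,|\,\psi\otimes B\rangle = \langle \phi|\psi\rangle\,\langle B|B\rangle = c,$$
while the cloning hypothesis, writing $\al_1 := \al(\phi,B)$ and $\al_2 := \al(\psi,B)$, evaluates the same inner product as $e^{i(\al_2 - \al_1)}\,c^2$.

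Equating the two expressions and dividing by $c \neq 0$ produces the relation $1 = e^{i(\al_2 - \al_1)}\,c$; taking absolute values forces $|c| = 1$, contradicting $|c| < 1$. Hence no such unitary $U$ can exist.

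The only mild subtlety is keeping track of the two unit-modulus phases $e^{i\al(\phi,B)}$ and $e^{i\al(\psi,B)}$, but they combine into a single factor which disappears under absolute values, so the core argument is essentially one line once the right pair of states has been chosen. In fact the proof yields something stronger: perfect cloning fails for \emph{any} pair of distinct, non-orthogonal pure states, which is the operative geometric content of the theorem.
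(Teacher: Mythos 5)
Your proof is correct and follows essentially the same route as the paper's: use unitarity to compute $\langle U(\phi\otimes B)\,|\,U(\psi\otimes B)\rangle$ in two ways and derive $\langle\psi|\phi\rangle=\langle\psi|\phi\rangle^2$ (up to phase), which fails for non-orthogonal, non-parallel states. Your version is in fact slightly more complete than the paper's, since you carry the phases $e^{i\alpha(\phi,B)}$ through explicitly rather than setting $\alpha=0$ ``for simplicity,'' and you close the contradiction by exhibiting a concrete pair such as $|0\rangle$ and $|+\rangle$.
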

\begin{proof} Assume such $U$ exists, and assume for simplicity $\alpha=0$.
Given the two states $|\phi\rangle$, $|\psi\rangle$
in $\cH$ we write:
$$
U:\cH \otimes \cH \lra \cH \otimes \cH,
\quad U(|\phi \rangle \otimes |B\rangle)=|\phi \rangle \otimes |\phi\rangle, \quad
U(|\psi \rangle \otimes |B\rangle)=|\psi \rangle \otimes |\psi\rangle
$$
Since $U$ preserves the hermitian product, we have that
$$
|\phi \rangle \otimes |B\rangle \cdot |\psi \rangle \otimes |B\rangle=
|\phi \rangle \otimes |\phi\rangle \cdot |\psi \rangle \otimes |\psi\rangle
$$
which gives the equality:
$$
\langle\psi|\phi \rangle \langle B |B\rangle=
\langle\psi|\phi \rangle  \langle \psi|\phi\rangle
\implies \langle\psi|\phi \rangle = \langle\psi|\phi \rangle^2 
$$
yielding either $\langle\psi|\phi \rangle=0$, i.e. $\psi$, $\phi$ orthogonal
or $\langle\psi|\phi \rangle=1$, i.e. $\phi=\psi$. 
\end{proof}

We overcome this difficulty by introducing a suitable 3 qubit gate (e.g. the Toffoli gate or CCNOT,
see Fig. \ref{ccnot}).

\begin{figure}[h!]
\begin{center}
\includegraphics[width=4in]{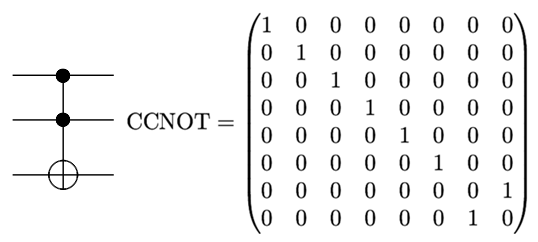}
\caption{CCNOT logic gate}\label{ccnot}
\end{center}
\end{figure}

Such gate allows to realize both a copy of the state of a qubit
and the NAND operation between two qubits, by suitably choosing the entry of the third qubit.


When performing a quantum algorithm on a real physical platform, the interaction of the system with the environment,
due to either measurements or noise, can degrade the information. This differs from what happens in a real classical
device, because a quantum channel that describes either a measurement or a noisy evolution does not simply flip
the state of a qubit in a random way, but can actually transform a pure state into a mixed one, thus resulting
in a loss of information. Also, error correction schemes are more difficult to implement because of the No-cloning theorem.

Still, we can develop fault tolerant algorithms to minimize the damage, and we have an important 
theoretical result, called the {\sl threshold theorem}. This is the analog of the classical
Von Neumann theorem, and states that a quantum computer with a physical error rate below a certain threshold can,
through application of quantum error correction methods, reduce the error rate to arbitrarily low levels.
Hence, we have the hope to create fault-tolerant algorithms
and feasible quantum computing in general. We invite the reader to look
at the works \cite{kitaev1, kitaev2, dklp} for topological approaches to such algorithms.


\section{Classical and Quantum Information Geometry}\label{info-sec}
Information geometry is a new field at the intersection of statistics,
probability and geometry, that
takes advantage of techniques from differential and Riemannian geometry
to study {\sl statistical manifolds}, that is manifolds whose points represent
families of probability distributions.
We focus on the most important notion of classical
Fisher Information Matrix, its quantum counterpart, the
Quantum Fisher Matrix, arriving to define
the Quantum Geometric Tensor, providing a natural hermitian
metric on the space of qubits.

\subsection{The Fisher Information Matrix}\label{fisher-sec}
A central object
in Information Geometry is the Fisher information matrix
(see \cite{fz} for the notation and the refs. therein).

\begin{definition}
Let $p(\theta)$ be a discrete empirical probability distribution, depending
on parameters $\theta \in \R^n$ (or to a statistical manifold embedded in $\R^n$).
  We define $F$ the
  \textit{Fisher information matrix} or \textit{Fisher matrix}
  for short, as
  \beq\label{fisher-def}
F_{ij}(\theta)=\bE_p[\partial_{\theta_i}\log
p(\theta)\partial_{\theta_j}\log p(\theta)]
  \eeq
where $\bE_p$ represents the expected value with respect to
the probability distribution $p$.
\end{definition}

We can also express more concisely the Fisher matrix as:
$$
F(\theta)=\bE_p[\nabla \log p(\theta) (\nabla \log p(\theta))^t]
$$
where we understand $\nabla \log p(\theta) $ as a column vector and $t$
denotes the transpose.

In his pioneering work \cite{amari}, Amari gave a geometric
interpretation of the Fisher matrix, as a meaningful metric
for the statistical manifold and provided a new method
of gradient descent based on his considerations. Later on,
this treatment inspired the quantum counterparts \cite{sikc}.

\medskip
It is therefore important to relate, in the spirit of Amari, 
the Fisher matrix to the information loss
of a probability distribution:
$$
I(\theta)=-\log(p(\theta))
$$
which is also linked to \textit{Shannon entropy} $H(\theta)$ as follows:
$$
H(\theta)=-\bE_p[I(\theta)]=- \sum p_i(\theta) \log(p_i(\theta))
$$
$H$ measures how "diffused" the information in our probability
distribution is; if $p(\theta)$ is a probability mass distribution (i.e. $p_i(\theta)=\de_{ij}$
for a fixed $i$ and $j=1,\dots, n$), then $H(\theta)=0$, while $H(\theta)$ is
maximal if $p_j(\theta)=1/n$.

We now express some well known facts on the Fisher matrix.

\begin{proposition}
  The Fisher matrix 
  is the covariance matrix of the gradient of the information loss.
\end{proposition}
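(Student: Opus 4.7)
The plan is to unwind both sides of the identity and to reduce the claim to the standard normalization identity $\bE_p[\partial_{\theta_i} \log p(\theta)] = 0$. Writing $I(\theta) = -\log p(\theta)$, the gradient is $\nabla I(\theta) = -\nabla \log p(\theta)$, and by definition the $(i,j)$ entry of the covariance matrix of this gradient is
$$
\mathrm{Cov}(\nabla I)_{ij} = \bE_p\bigl[(\partial_{\theta_i} I - \bE_p[\partial_{\theta_i} I])(\partial_{\theta_j} I - \bE_p[\partial_{\theta_j} I])\bigr].
$$
Comparing with the definition (\ref{fisher-def}), and noting that the two minus signs from $I = -\log p$ cancel in the product, the statement reduces to showing that the mean of $\partial_{\theta_i} I$ vanishes, i.e.\ that $\bE_p[\partial_{\theta_i} \log p(\theta)] = 0$ for each $i$.

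The key step is then this vanishing identity, which I would derive by differentiating the normalization condition. For a discrete distribution $p(\theta) = (p_1(\theta), \dots, p_n(\theta))$ with $\sum_k p_k(\theta) = 1$, differentiating in $\theta_i$ and using $\partial_{\theta_i} p_k = p_k\, \partial_{\theta_i}\log p_k$ gives
$$
0 = \partial_{\theta_i} \sum_k p_k(\theta) = \sum_k p_k(\theta)\, \partial_{\theta_i}\log p_k(\theta) = \bE_p[\partial_{\theta_i}\log p(\theta)].
$$
Substituting this back shows that the cross-terms in the covariance drop out and $\mathrm{Cov}(\nabla I)_{ij} = \bE_p[\partial_{\theta_i}\log p\cdot \partial_{\theta_j}\log p] = F_{ij}(\theta)$, which is the claim.

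There is no real obstacle here; the only subtlety is the usual one of justifying the exchange of $\partial_{\theta_i}$ with the (finite) sum, which in the discrete setting considered in the excerpt is immediate. In the continuous setting one would need a mild regularity/dominated-convergence assumption on the family $p(\theta)$, but this is not needed for the statement as written. I would therefore present the proof in two short steps: first establish the score-vanishing identity from the normalization of $p$, then expand the covariance and read off the Fisher matrix.
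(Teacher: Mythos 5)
Your proof is correct and follows essentially the same route as the paper: both establish the score-vanishing identity $\bE_p[\nabla_\theta \log p(\theta)]=0$ by differentiating the normalization $\sum_k p_k(\theta)=1$, and then expand the covariance of $\nabla I$ so that the cross-terms drop and the Fisher matrix (\ref{fisher-def}) remains. The only difference is cosmetic (entrywise versus matrix notation), so no further comment is needed.
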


\begin{proof} The gradient of the information loss is 
$$
\nabla_\theta I(\theta)=- \frac{\nabla_\theta p( \theta)}{p( \theta)}
$$
Notice that:
$$
\begin{array}{l}
\bE_{p}(\nabla_\theta I)=
\sum p_i\frac{\nabla_\theta p_i}{p_i}=
\sum_i\nabla_\theta p_i=
\nabla_\theta(\sum_i p_i)=0
\end{array}
$$ 
The covariance
matrix of $\nabla_\theta I(\theta)$ is (by definition): 
$$
\begin{array}{l}
\mathrm{Cov}(I)=\bE_{p}[(\nabla_\theta I -\bE_{p}(\nabla_\theta I))^t
(\nabla_\theta I -\bE_{p}(\nabla_\theta I))]= \\ \\
\qquad=\bE_{p}[(\nabla_\theta I)^t(\nabla_\theta I)]=F(\theta)
\end{array}
$$
\end{proof}

We conclude our brief treatment of Information Geometry by an observation
regarding the metric on the parameter space.

\begin{observation}
We observe that {the Fisher metric is related to the Kullback-Leibler divergence (see \cite{fz}
and refs therein)
that gives the statistical distance between two probability distributions. Indeed, if we consider 
$\KL(p( \theta+\delta w)||p( \theta))$ that measures
how $p( \theta+\delta w)$ and $p( \theta)$ differ, for a small variation of the parameters $\delta w$,
we have:}
$$
\begin{array}{rl}
\KL(p( \theta+\delta w)||p( \theta)) &\cong 
 \frac{1}{2}(\delta w)^t F(\theta) (\delta w) 
+\mathcal{O}(||\delta w||^3)
\end{array}
$$
This clearly shows how the Fisher can be interpreted as a metric
on the statistical manifold.
\end{observation}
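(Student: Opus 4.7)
The plan is to establish the expansion by Taylor expanding the two factors appearing in $\KL(p(\theta+\delta w)\|p(\theta))$ separately around $\delta w=0$, then collecting terms up to order $\|\delta w\|^2$ and invoking the normalization identity $\sum_i p_i(\theta)\equiv 1$ (differentiated once and twice) that already underlies the previous Proposition.

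First I would write
$$
\KL(p(\theta+\delta w)\|p(\theta))=\sum_i p_i(\theta+\delta w)\bigl[\log p_i(\theta+\delta w)-\log p_i(\theta)\bigr],
$$
and, setting $\ell_i(\theta):=\log p_i(\theta)$, expand
$$
\ell_i(\theta+\delta w)-\ell_i(\theta)=(\delta w)^t\nabla\ell_i+\tfrac{1}{2}(\delta w)^t\nabla^2\ell_i\,(\delta w)+\mathcal{O}(\|\delta w\|^3),
$$
together with $p_i(\theta+\delta w)=p_i+(\delta w)^t\nabla p_i+\mathcal{O}(\|\delta w\|^2)$, all derivatives being evaluated at $\theta$. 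Multiplying these two expansions and retaining everything up to quadratic order in $\delta w$ produces three contributions: a linear piece $\sum_i p_i(\delta w)^t\nabla\ell_i=(\delta w)^t\sum_i\nabla p_i$, a Hessian piece $\tfrac{1}{2}(\delta w)^t\bigl(\sum_i p_i\nabla^2\ell_i\bigr)(\delta w)$, and a cross piece $(\delta w)^t\bigl(\sum_i\nabla p_i\,(\nabla\ell_i)^t\bigr)(\delta w)$.

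The linear piece vanishes by the same identity used in the previous Proposition, namely $\sum_i\nabla p_i=\nabla(\sum_i p_i)=0$. Using $\nabla p_i=p_i\nabla\ell_i$ and the definition of $F$, the cross piece equals $(\delta w)^tF(\theta)(\delta w)$. For the Hessian piece, the algebraic identity $\nabla^2\log p_i=\nabla^2 p_i/p_i-\nabla\ell_i(\nabla\ell_i)^t$ combined with $\sum_i\nabla^2 p_i=0$ yields $\sum_i p_i\nabla^2\ell_i=-F(\theta)$, so this piece contributes $-\tfrac{1}{2}(\delta w)^tF(\theta)(\delta w)$. Adding the three contributions gives the claimed expansion $\tfrac{1}{2}(\delta w)^tF(\theta)(\delta w)+\mathcal{O}(\|\delta w\|^3)$.

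The main obstacle is purely bookkeeping: one has to remember to carry the first-order term of $p_i(\theta+\delta w)$ when multiplying it against the linear term of the $\log$-expansion, otherwise the cross piece is missed and the two surviving quadratic contributions fail to combine with the correct sign. No further analytic input is required beyond the two applications of the normalization identity already used earlier, so once the terms are correctly organized the statement follows by a direct computation.
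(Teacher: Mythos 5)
Your computation is correct. The paper itself states this Observation without proof (deferring to \cite{fz}), and the argument it implicitly relies on is exactly the one you carry out: a second-order Taylor expansion of $\sum_i p_i(\theta+\delta w)\bigl[\log p_i(\theta+\delta w)-\log p_i(\theta)\bigr]$, with the linear term killed by $\sum_i \nabla p_i=0$ and the Hessian and cross terms combining via $\sum_i p_i\nabla^2\log p_i=-F(\theta)$ to leave $\tfrac{1}{2}(\delta w)^tF(\theta)(\delta w)$. Your bookkeeping, including the easily missed cross term coming from the first-order variation of the prefactor $p_i(\theta+\delta w)$, is accurate, so the proposal is a complete and correct justification of the stated expansion.
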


\subsection{The Quantum Information Matrix} \label{qfisher-sec}

In quantum metrology, which is
the study of physical measurements via quantum theory, 
the {\sl quantum information matrix} or
{\sl quantum Fisher matrix} is the analog of the Fisher matrix, 
introduced in the previous section. We now give its definition
and main properties; for the details,
we refer the reader to \cite{helstrom}, \cite{liu} and refs. therein.

\medskip
We start with the definition of
{\sl symmetric logarithmic derivative}. This is a generalization of
the usual notion of logarithmic derivative and it is necessary
in the quantum setting, since the scalar expressing
the probability distribution $p$ appearing in
the definition Fisher matrix (see (\ref{fisher-def})) is replaced by
the density operator $\rho$ as in (\ref{rho-mixed}).

\medskip
We assume that $\rho$ is depending on a parameter $\theta$; when $\rho$
is defined for pure states (a special, but important case),
this parameter comes from a parametrization 
{ of the vectors in the Hilbert space.
}

\begin{definition}
We define, implicitly, the \textit{symmetric logarithmic derivative} $L$
of a given density operator $\rho$ as:
\beq\label{log-def}
\partial_\theta \rho =\frac{L \rho + \rho L}{2}
\eeq
\end{definition}
We notice immediately that, whenever $L$ and $\rho$ are scalars,
$L$ is the gradient of the logarithm of $\rho$ as it appears in
(\ref{fisher-def}), that is 
$L=\partial_\theta \rho /\rho=\partial_\theta(\log(\rho))$.

We can derive the explicit expression of $L$, when we fix an
ON (orthonormal) basis $\{|\psi_j\rangle\}$ for qubits. 

\begin{observation}
Let $\{|\psi_j\rangle\}$ be an ON basis for $\cH$. 
{Thanks to the properties of the density matrix $\rho$, we can
consider the ON basis $\{|\psi_j\rangle\}$ of its eigenvectors, with 
$ \rho |\psi_j\rangle = p_j |\psi_j\rangle$, $j=1,\dots,dim(\cH)=N$. 
Here $0\leq p_j\leq 1,\sum p_j =1$ and indeed we can assume that only the 
first $s$ values are different form zero, $s$ being the rank of $\rho$. 

Now, for any  $N \times N$ matrix $A$, 
we can express the $(i,j)$ entry of $A$ as $A_{ij}=\langle\psi_i|A|\psi_j\rangle$. 
Applying this observation to (\ref{log-def}) and using the definition (\ref{rho-mixed}) for
$\rho$, we get:
$$
\begin{array}{rl}
(\partial_\theta \rho)_{ij}&= \langle\psi_i|\partial_\theta \rho |\psi_j\rangle=
\frac{1}{2}\langle\psi_i| (L \rho + \rho L) |\psi_j\rangle=\\ \\
&=\frac{1}{2} \left(\langle\psi_i| L p_j  |\psi_j\rangle +
\langle\psi_i| p_i L  |\psi_j\rangle \right)=\frac{p_i+p_j}{2}L_{ij}
\end{array}
$$
Notice that $p_i+p_j=0$ if both $i,j>s$ so that we can assume 
$$
L_{ij}=0 \; , \;  i,j=s,\dots N
$$ 
while the other matrix elements of $L$ are given by}
\beq\label{log-der} 
L_{ij}=\frac{2 (\partial_\theta \rho)_{ij}}{p_i+p_j}
\eeq
\end{observation}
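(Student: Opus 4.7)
The plan is to diagonalise $\rho$ and then read off the matrix elements of the defining relation (\ref{log-def}) in the eigenbasis. First I would invoke the spectral theorem: since $\rho$ is self-adjoint, positive semi-definite, and has unit trace, $\cH$ admits an ON basis $\{|\psi_j\rangle\}_{j=1}^N$ of eigenvectors of $\rho$, with $\rho|\psi_j\rangle = p_j |\psi_j\rangle$, $0\le p_j\le 1$, and $\sum_j p_j=1$. After a permutation we may assume $p_1,\dots,p_s>0$ and $p_{s+1}=\dots=p_N=0$, where $s=\rank\rho$.

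Next I would sandwich both sides of $\partial_\theta\rho = \tfrac{1}{2}(L\rho+\rho L)$ with $\langle\psi_i|\cdot|\psi_j\rangle$. Using $\rho|\psi_j\rangle = p_j|\psi_j\rangle$ to act on the right factor and the self-adjointness identity $\langle\psi_i|\rho = p_i\langle\psi_i|$ to act on the left factor, the right-hand side collapses cleanly: $\langle\psi_i|L\rho|\psi_j\rangle = L_{ij}p_j$ and $\langle\psi_i|\rho L|\psi_j\rangle = p_i L_{ij}$, giving $\tfrac{p_i+p_j}{2}L_{ij}$. Equating with $(\partial_\theta\rho)_{ij}$ and dividing by $(p_i+p_j)/2$ on the block where $p_i+p_j>0$ produces the formula (\ref{log-der}).

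The one delicate point, which I expect to be the main obstacle, is the block $i,j>s$ where $p_i+p_j=0$. Here (\ref{log-def}) degenerates to $0=(\partial_\theta\rho)_{ij}$, so $L_{ij}$ is not determined by the defining equation, and the convention $L_{ij}=0$ needs justification. I would argue this in two complementary ways. First, since $p_i(\theta)\ge 0$ and $p_i(\theta_0)=0$, the function $p_i$ attains its minimum at $\theta_0$, whence $\partial_\theta p_i(\theta_0)=0$; a first-order perturbation argument applied to $\rho(\theta)|\psi_j\rangle = p_j(\theta)|\psi_j(\theta)\rangle$ then gives $\langle\psi_i|\partial_\theta\rho|\psi_j\rangle = -p_i\langle\psi_i|\partial_\theta\psi_j\rangle = 0$ whenever $p_i=p_j=0$, showing that the defining equation is indeed consistent with any choice of $L_{ij}$ in this block. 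Second, in the quantities one ultimately cares about, such as $\mathrm{Tr}(\rho L^2)$ and the quantum Fisher matrix built from it, the entries $L_{ij}$ with $i,j>s$ are always multiplied by a zero eigenvalue and therefore contribute nothing. Hence the normalization $L_{ij}=0$ on the zero block is both consistent and harmless, and together with the explicit formula on the positive block this completes the derivation of (\ref{log-der}).
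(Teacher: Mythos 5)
Your proposal follows essentially the same route as the paper: diagonalize $\rho$ by the spectral theorem, take matrix elements of $\partial_\theta\rho=\tfrac{1}{2}(L\rho+\rho L)$ in the eigenbasis using $\rho|\psi_j\rangle=p_j|\psi_j\rangle$ and $\langle\psi_i|\rho=p_i\langle\psi_i|$ to obtain $\tfrac{p_i+p_j}{2}L_{ij}$, and then divide where $p_i+p_j>0$ while setting $L_{ij}=0$ on the degenerate block. Your additional justification of the convention $L_{ij}=0$ for $i,j>s$ --- both the consistency check that the defining equation forces $(\partial_\theta\rho)_{ij}=0$ there, and the remark that these entries never enter $\mathrm{tr}[\rho L^2]$ --- is more careful than the paper's bare ``we can assume,'' but it is an elaboration of the same argument rather than a different one.
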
 

Now we can define the quantum information matrix.

\begin{definition}
Let $\rho$ be a density operator as in (\ref{rho-mixed}).
We define \textit{quantum information matrix} as
\beq
F_Q = \mathrm{tr}[\rho L^2]
\eeq
where $L$ is the symmetric logarithmic derivative of $\rho$.
\end{definition}

We notice that $F_Q$ is actually an operator, however we adhere
to the most common terminology {\sl quantum information matrix}.
With an abuse of notation, we write $F_Q$ also for the matrix
representing $F_Q$ in a fixed ON basis.

\medskip
There is a strict analogy between $F_Q$ and the
Fisher information matrix $F$ as in (\ref{fisher-def}).

\begin{observation}

Let the notation be as above and let us fix {the ON basis of eigenvectors 
of $\rho$}. Then, we have:
\beq\label{fq-exp}
F_Q=\sum_{k=1}^s \sum_{l=1}^N  p_k L_{kl} L_{lk}
\eeq
In fact
$$
\begin{array}{rl}
F_Q&=\mathrm{tr}[\rho L^2]=\sum_k \langle\psi_k|\rho L^2 |\psi_k\rangle=
\sum_{k,l} \langle\psi_k|\rho|\psi_l\rangle \langle\psi_l| L^2 |\psi_k\rangle=
\\ \\
&=\sum_k p_k\langle\psi_k| L^2 |\psi_k\rangle=
\sum_{k,l} p_k\langle\psi_k| L |\psi_l\rangle \langle\psi_l| L |\psi_k\rangle 
\end{array}
$$
which is the expression (\ref{fq-exp}). { Notice that the sum over $k$ is in practice restricted 
to $k=1,\dots,N$ since $p_k=0$ if $k>s$. }
As one readily sees, this is the analog of the expression (\ref{fisher-def})
for $F$. 
\end{observation}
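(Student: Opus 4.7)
The plan is to unpack the trace $\mathrm{tr}[\rho L^2]$ by computing it in a basis in which $\rho$ is diagonal, which is natural since that same basis was used to derive the explicit form of $L_{ij}$ in equation (\ref{log-der}).

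First, I would write $\mathrm{tr}[\rho L^2] = \sum_k \langle \psi_k | \rho L^2 | \psi_k \rangle$, taking $\{|\psi_k\rangle\}$ to be the ON eigenbasis of $\rho$ so that $\rho|\psi_k\rangle = p_k |\psi_k\rangle$ (equivalently $\langle \psi_k | \rho = p_k \langle \psi_k|$). Since $\rho$ is self-adjoint, using either side of $\rho$ is fine; I would pull $\rho$ through to obtain $\sum_k p_k \langle \psi_k | L^2 | \psi_k \rangle$.

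Next, I would insert the resolution of the identity $\sum_l |\psi_l\rangle \langle \psi_l| = I$ between the two copies of $L$, yielding
$$
F_Q = \sum_{k,l} p_k \langle \psi_k | L | \psi_l \rangle \langle \psi_l | L | \psi_k \rangle = \sum_{k,l} p_k L_{kl} L_{lk},
$$
where the sum on $k,l$ a priori runs over $1,\dots,N$. Finally, to match the stated range $k=1,\dots,s$, I would invoke the assumption that only the first $s$ eigenvalues of $\rho$ are nonzero, so that $p_k = 0$ for $k>s$ kills those terms and the outer sum collapses to $k=1,\dots,s$.

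There is no real obstacle here: the argument is just trace cyclicity combined with completeness in the eigenbasis. The only thing worth being mildly careful about is that $L$ is not diagonal in this basis (only $\rho$ is), so the middle step genuinely requires the insertion of $\sum_l |\psi_l\rangle\langle\psi_l|$ rather than a direct evaluation of $L^2$ on $|\psi_k\rangle$. One could alternatively note the analogy with the classical Fisher matrix at the end, observing that the pair $(p_k, L_{kl})$ plays the role of $(p_i(\theta), \partial_{\theta_i}\log p)$ in (\ref{fisher-def}), but this is a remark rather than a step in the proof.
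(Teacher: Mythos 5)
Your proof is correct and follows essentially the same route as the paper: expand the trace in the eigenbasis of $\rho$, use $\rho|\psi_k\rangle = p_k|\psi_k\rangle$ to pull out $p_k$, insert the resolution of the identity between the two factors of $L$, and restrict the outer sum to $k\le s$ since $p_k=0$ for $k>s$. The only cosmetic difference is that the paper inserts an extra resolution of the identity to evaluate $\langle\psi_k|\rho L^2|\psi_k\rangle$ where you act with $\rho$ on the bra directly.
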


We now give another useful and interesting expression
of $F_Q$.

\begin{proposition}
Let the notation be as above. Then
\beq\label{prop-qfisher}
F_Q=\sum_{i =1}^s \frac{1}{p_i}({\partial_\theta p_i})^2+
\sum_{i=1}^s\sum_{j=1}^N \frac{4p_i(p_i-p_j)^2}{(p_i+p_j)^2}
|\langle\psi_i| \partial_\theta \psi_j\rangle|^2
\eeq
where $s$ is the rank of $\rho$.
\end{proposition}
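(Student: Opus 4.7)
The plan is to start from equation (\ref{fq-exp}) combined with the explicit formula (\ref{log-der}) for $L_{ij}$, and rewrite everything in terms of derivatives of the eigenvalues $p_i$ and overlaps $\langle \psi_i | \partial_\theta \psi_j\rangle$. The key computational step is to evaluate the matrix elements $(\partial_\theta \rho)_{ij}$ in the eigenbasis of $\rho$.

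First, I would differentiate the spectral decomposition $\rho = \sum_m p_m |\psi_m\rangle \langle \psi_m|$ with respect to $\theta$, obtaining three contributions: one from $\partial_\theta p_m$, and two from $|\partial_\theta \psi_m\rangle$ and $\langle \partial_\theta \psi_m|$. Taking matrix elements $\langle \psi_i | \cdot | \psi_j \rangle$ and invoking orthonormality $\langle \psi_i | \psi_j\rangle = \delta_{ij}$ (which, differentiated, gives $\langle \partial_\theta \psi_i | \psi_j\rangle = -\langle \psi_i | \partial_\theta \psi_j\rangle$), the computation collapses to
$$
(\partial_\theta \rho)_{ij} = (\partial_\theta p_i)\, \delta_{ij} + (p_j - p_i)\, \langle \psi_i | \partial_\theta \psi_j\rangle .
$$
Dividing by $(p_i + p_j)/2$ as in (\ref{log-der}) then yields $L_{ii} = (\partial_\theta p_i)/p_i$ on the diagonal and $L_{ij} = 2(p_j - p_i)(p_i+p_j)^{-1} \langle \psi_i | \partial_\theta \psi_j\rangle$ off the diagonal.

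Next, I would plug these expressions into $F_Q = \sum_{k,l} p_k L_{kl} L_{lk}$ and split the sum into diagonal and off-diagonal contributions. The diagonal part produces immediately $\sum_{k=1}^{s} (\partial_\theta p_k)^2 / p_k$. For the off-diagonal part, I would use the antihermiticity relation $\langle \psi_l | \partial_\theta \psi_k\rangle = -\overline{\langle \psi_k | \partial_\theta \psi_l\rangle}$ together with the factor $(p_l - p_k)(p_k - p_l) = -(p_k - p_l)^2$ to recognize that the product $L_{kl} L_{lk}$ simplifies to $4(p_k-p_l)^2(p_k+p_l)^{-2} |\langle \psi_k | \partial_\theta \psi_l\rangle|^2$. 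Summing with weight $p_k$ (which forces $k \leq s$) reproduces the second term of (\ref{prop-qfisher}); note that the $l = k$ term vanishes automatically thanks to the factor $(p_k - p_l)^2$, so one may freely extend the inner sum to all $l = 1,\dots,N$.

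The main obstacle is bookkeeping on the support: the formula (\ref{log-der}) for $L_{ij}$ is only defined when $p_i + p_j \neq 0$, so one must check that pairs $(i,j)$ with both $i, j > s$ do not contribute (which is already ensured by the convention stated after (\ref{log-der}), and by the prefactor $p_k$ in $F_Q$). Beyond this, the proof is essentially an orchestrated algebraic manipulation; no serious obstacle arises.
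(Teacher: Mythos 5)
Your proposal is correct and follows essentially the same route as the paper: substitute the explicit formula for $L_{ij}$ into $F_Q=\sum_{k,l}p_kL_{kl}L_{lk}$, compute $(\partial_\theta\rho)_{ij}=(\partial_\theta p_i)\delta_{ij}+(p_j-p_i)\langle\psi_i|\partial_\theta\psi_j\rangle$ from the spectral decomposition using differentiated orthonormality, and split into diagonal and off-diagonal parts. Your handling of the conjugation $\langle\psi_l|\partial_\theta\psi_k\rangle=-\overline{\langle\psi_k|\partial_\theta\psi_l\rangle}$ and of the support bookkeeping is in fact slightly more careful than the paper's own write-up.
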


\begin{proof}
We substitute the expression (\ref{log-der}) for $L_{ij}$ in (\ref{fq-exp}):
\beq\label{fq2}
F_Q=  \sum_{i=1}^s \sum_{j=1}^N 4p_i
\frac{(\partial_\theta \rho)_{ij}(\partial_\theta \rho)_{ji}}{(p_i+p_j)^2}
\eeq
From the very definition of $\rho$ we get:
$$
(\partial_\theta \rho)_{ij}=\de_{ij} \partial p_i+(p_j-p_i)
\langle\psi_i| \partial_\theta \psi_j\rangle
$$
where we used that
$$
0= { \partial_{\theta} (\langle\psi_i |\psi_j\rangle) }=\langle \partial_\theta \psi_i |
\psi_j \rangle+\langle\psi_i | \partial_\theta \psi_j \rangle 
$$
Notice that if $i \neq j$, $(\partial_\theta \rho)_{ij}=
-(\partial_\theta \rho)_{ji}$, so that:
$$
F_Q=\sum_{i =1}^s \frac{1}{p_i}({\partial_\theta p_i})^2+
\sum_{i=1}^s \sum_{i\neq j, j=1}^N
\frac{|(\partial_\theta \rho)_{ij}|^2}{(p_i+p_j)^2}
$$
Now, substituting the expression for $(\partial_\theta \rho)_{ij}$
we obtain the result.
\end{proof}

The next observation compares, once again, the classical Fisher 
information matrix with the quantum $F_Q$.

{
\begin{observation}
We notice that the first term of the Quantum Fisher (\ref{prop-qfisher}) coincides with the Classical 
Fisher Matrix (\ref{fisher-def}): indeed it is given by changes in the eigenvalues only, i.e. of the probabilities 
$p_k$,  of the density matrix. Instead, the second term of eq. (\ref{prop-qfisher}) arises from changes
in the eigenvectors and it is therefore quantum in nature.
\end{observation}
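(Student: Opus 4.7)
The plan is to start from the expression
\[
F_Q=\sum_{k=1}^s\sum_{l=1}^N p_k L_{kl}L_{lk}
\]
derived in the previous observation and substitute the explicit form (\ref{log-der}) of $L_{ij}$. Since $\rho$ is self-adjoint and $L$ is also self-adjoint (being defined by the symmetric combination (\ref{log-def})), we have $L_{lk}=\overline{L_{kl}}$, so $L_{kl}L_{lk}=|L_{kl}|^2=\frac{4|(\partial_\theta\rho)_{kl}|^2}{(p_k+p_l)^2}$. This reduces the problem to computing $(\partial_\theta\rho)_{ij}$ in the eigenbasis $\{|\psi_j\rangle\}$ of $\rho$, and then organising the double sum so as to separate the ``classical'' diagonal contribution from the ``quantum'' off-diagonal one.

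The main technical computation is the matrix element $(\partial_\theta\rho)_{ij}$. Differentiating the spectral decomposition $\rho=\sum_k p_k|\psi_k\rangle\langle\psi_k|$ gives three pieces: one from $\partial_\theta p_k$ and two from $\partial_\theta|\psi_k\rangle$ and $\partial_\theta\langle\psi_k|$. Taking matrix elements $\langle\psi_i|\cdot|\psi_j\rangle$ and using the orthonormality $\langle\psi_i|\psi_k\rangle=\delta_{ik}$ collapses the first piece to $\delta_{ij}\partial_\theta p_i$. For the other two pieces, the key identity is obtained by differentiating $\langle\psi_i|\psi_j\rangle=\delta_{ij}$, giving $\langle\partial_\theta\psi_i|\psi_j\rangle=-\langle\psi_i|\partial_\theta\psi_j\rangle$. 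This yields
\[
(\partial_\theta\rho)_{ij}=\delta_{ij}\,\partial_\theta p_i+(p_j-p_i)\langle\psi_i|\partial_\theta\psi_j\rangle.
\]

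With this explicit expression, I split the sum $\sum_{i,j}$ into the diagonal part ($i=j$) and the off-diagonal part. On the diagonal, the second summand vanishes and $(p_i+p_j)^2=4p_i^2$, producing the factor $\frac{4p_i(\partial_\theta p_i)^2}{4p_i^2}=\frac{(\partial_\theta p_i)^2}{p_i}$, which is exactly the first term of (\ref{prop-qfisher}). Off the diagonal, the squared modulus reads $|(\partial_\theta\rho)_{ij}|^2=(p_i-p_j)^2|\langle\psi_i|\partial_\theta\psi_j\rangle|^2$, giving the second term; because this expression vanishes identically when $j=i$, I can harmlessly extend the inner sum to all $j=1,\dots,N$ and write it in the clean form stated.

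The obstacle I expect to be most delicate is not the algebraic manipulation but the careful bookkeeping of the summation ranges. Specifically, the formula (\ref{log-der}) is only valid for $(i,j)$ not both above $s$, and the convention $L_{ij}=0$ for $i,j>s$ must be used to control the terms in which $p_i+p_j$ would vanish. I must also tacitly assume that the eigenvectors $|\psi_j(\theta)\rangle$ can be chosen smoothly in $\theta$, which is automatic away from eigenvalue crossings; at crossings some additional care is needed, but these form a measure-zero set and are not addressed at the level of generality considered here.
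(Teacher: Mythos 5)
Your proposal is correct and follows essentially the same route as the paper: the Observation itself is stated there without proof, being an interpretive remark on equation (\ref{prop-qfisher}), and your computation reproduces the paper's proof of the preceding Proposition --- substituting (\ref{log-der}) into (\ref{fq-exp}), computing $(\partial_\theta\rho)_{ij}=\delta_{ij}\partial_\theta p_i+(p_j-p_i)\langle\psi_i|\partial_\theta\psi_j\rangle$ via differentiating $\langle\psi_i|\psi_j\rangle=\delta_{ij}$, and splitting the diagonal (eigenvalue-only, classical) from the off-diagonal (eigenvector-only, quantum) contributions. Your remarks on hermiticity ($L_{lk}=\overline{L_{kl}}$, hence $|(\partial_\theta\rho)_{ij}|^2$), on retaining the factor $4p_i$ in the off-diagonal term, and on the summation-range bookkeeping for $i,j>s$ are all consistent with, and in places slightly more careful than, the paper's own presentation.
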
}

\subsection{Quantum Geometric Tensor} \label{qgt-sec}

The quantum geometric tensor is a key object in quantum information
geometry and it is strictly related to the quantum information matrix
introduced in the previous section.

As we remarked in Sec. \ref{qbit-sec}, the space of qubits is identified
with the projective space
$\bP^N(\C)$, that is the space of {\sl rays} or lines in $\cH=
\C^2 \otimes \dots \otimes \C^2$, $N=2^n$.
The projective space is a K\"ahler manifold, hence it has a natural
K\"ahler metric, called the \textit{Fubini-Study metric}, which is a 2-tensor on
the tangent space of $\bP^N(\C)$. It is a well known Hermitian metric; if we fix an holomorphic
frame, its expression is given by:
\beq\label{fs}
ds^{2}:=g_{i{\bar {j}}}\,dz^{i}\,d{\bar {z}}^{j}=
{\frac {(1+z_{i}{\bar {z}}^{i})\,dz_{j}\,d{\bar {z}}^{j}-{\bar {z}}^{j}z_{i}\,
dz_{j}\,d{\bar {z}}^{i}}{\left(1+z_{i}{\bar {z}}^{i}\right)^{2}}}
\eeq
where, as customary, we sum over repeated indices.

The Hermitian matrix of the Fubini–Study metric $(g_{i{\bar {j}}})$ in this frame is
explicitly given by:
$$
{\bigl (}g_{i{\bar {j}}}{\bigr )}={\frac {1}{\left(1+|\mathbf {z} |{\vphantom {l}}^{2}\right)^{2}}}\left({\begin{array}{cccc}1+|\mathbf {z} |^{2}-|z_{1}|^{2}&-{\bar {z}}_{1}z_{2}&\cdots &-{\bar {z}}_{1}z_{n}\\-{\bar {z}}_{2}z_{1}&1+|\mathbf {z} |^{2}-|z_{2}|^{2}&\cdots &-{\bar {z}}_{2}z_{n}\\\vdots &\vdots &\ddots &\vdots \\-{\bar {z}}_{n}z_{1}&-{\bar {z}}_{n}z_{2}&\cdots &1+|\mathbf {z} |^{2}-|z_{n}|^{2}\end{array}}\right)
$$
with ${|\mathbf {z}|}= |z_1|^2 +|z_2|^2 + \dots + |z^n|^2$.
The metric expressed by $(g_{i{\bar {j}}})$ is clearly invariant under unitary transformations.

Furthermore, we notice
that this metric can be derived from a {\it K\"ahler potential} $K$, that is, we can write it as:
$$
g_{i{\bar {j}}}= \frac{\partial ^{2}} {\partial z^{i} \, \partial \bar z^{j}} K, \qquad
K=\log(1+z_{i}{\bar {z}}^{i})
$$

We briefly recall some
well known facts regarding K\"ahler manifolds.
First, recall that a real manifold $X$ is K\"ahler if it is a symplectic manifold $(X, \omega)$
with an integrable almost-complex structure $J$ compatible with the symplectic form $\omega$,
that is:
$$
g(u,v):=\omega (u,Jv)
$$
is a Riemannian metric on $X$ \cite{huy}. Equivalently,
a complex manifold $X$ is K\"ahler if it has 
an hermitian metric $h$ and a closed 2-form $\omega$ given by:
$$
\omega (u,v):=\operatorname {Im} h(u,v) \, (=\operatorname {Re} h(iu,v))
$$


\medskip
On a K\"ahler manifold we then have the following structures,
compatible with each other. 
\begin{itemize}
\item A complex structure $J$, i.e. $J^2=-I$ on the tangent bundle.
\item A symplectic (real) structure $\omega$, i.e. a non degenerate closed 2-form.
\item A Riemannian metric $g$, ($g(X,Y)=\omega(X,JY)$).
\item An hermitian metric $h$ such that (real and imaginary parts):
$$
h=g-i\omega
$$
\end{itemize}

We are ready for the key definition of Quantum Geometric Tensor.

\begin{definition}
We define the \textit{Quantum Geometric Tensor} (QGT) on 
$\bP^N(\C)$ as the (K\"ahler) metric: 
\begin{equation} \label{khaler}
ds^2 = \frac{\langle d \psi \vert d \psi \rangle} {\langle \psi \vert \psi \rangle} -
\frac {\langle d \psi \vert \psi \rangle \; \langle \psi \vert d \psi \rangle}
{\langle \psi \vert \psi \rangle^2 }
\end{equation}
where here { $| \psi \rangle$} is not assumed to have length 1.
\end{definition}

We have the following result, which is just a calculation that can be found in \cite{hou}, App. A.

\begin{proposition}
The QGT is equivalent to the Fubini-Study hermitian metric.
\end{proposition}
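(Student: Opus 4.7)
The plan is to reduce the statement to an identification of K\"ahler potentials. First, I would verify that the right-hand side of (\ref{khaler}) is invariant under $|\psi\rangle \mapsto \lambda(\theta)|\psi\rangle$ for any nowhere-vanishing smooth complex function $\lambda$. A direct expansion using $|d(\lambda\psi)\rangle = d\lambda\,|\psi\rangle + \lambda|d\psi\rangle$ in both terms of (\ref{khaler}) shows that all the $d\lambda$-dependent contributions cancel between the two summands and the $|\lambda|^2$ factors cancel in the ratios. This confirms that the QGT is a well-defined tensor on $\bP^N(\C)$ and permits us to compute it in any convenient holomorphic local section.

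Next I would pass to the standard affine chart $\{z^0\neq 0\}$: fix an ON basis $|0\rangle,\dots,|N\rangle$ of $\cH$ and use the holomorphic representative $|\psi(\mathbf{z})\rangle=|0\rangle+\sum_{k=1}^N z^k |k\rangle$. Because $|\psi\rangle$ is holomorphic in $\mathbf{z}$, we have $|d\psi\rangle=\sum_k dz^k |k\rangle$ with no antiholomorphic part and $\langle d\psi|=\sum_k d\bar z^k\langle k|$ with no holomorphic part. Hence the QGT has only mixed components $g_{i\bar j}\,dz^i d\bar z^j$, in agreement with the K\"ahler form of (\ref{fs}).

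The key observation is that, for such a holomorphic representative, the right-hand side of (\ref{khaler}) is precisely the Hessian of the K\"ahler potential $K=\log\langle\psi|\psi\rangle$: a one-line application of the quotient rule to $\partial_i\log\langle\psi|\psi\rangle=\langle\psi|\partial_i\psi\rangle/\langle\psi|\psi\rangle$ yields
$$\partial_i\partial_{\bar j}\log\langle\psi|\psi\rangle=\frac{\langle\partial_j\psi|\partial_i\psi\rangle}{\langle\psi|\psi\rangle}-\frac{\langle\partial_j\psi|\psi\rangle\langle\psi|\partial_i\psi\rangle}{\langle\psi|\psi\rangle^2},$$
which is exactly the coefficient of $dz^i d\bar z^j$ in (\ref{khaler}). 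Computing $\langle\psi|\psi\rangle=1+|\mathbf{z}|^2=1+z_i\bar z^i$ identifies this potential with the Fubini--Study potential displayed just after (\ref{fs}). Since two K\"ahler metrics sharing the same local potential coincide, the QGT equals the Fubini--Study metric on this chart, and by projective invariance it agrees with it globally on $\bP^N(\C)$.

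No serious obstacle arises; the sole delicate point is the bookkeeping of complex-conjugate indices in the sesquilinear pairing, together with the projective-invariance check above. An alternative route, which bypasses the chart altogether, is to normalize $\|\psi\|=1$: then $\langle\psi|d\psi\rangle$ is purely imaginary (the Berry connection), and (\ref{khaler}) reduces to $\langle d\psi|d\psi\rangle-|\langle\psi|d\psi\rangle|^2$, the horizontal component of the round metric on $S^{2N+1}$ descended through the Hopf fibration $S^{2N+1}\to\bP^N(\C)$, which is by construction the Fubini--Study metric.
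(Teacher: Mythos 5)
Your proposal is correct, and it supplies what the paper deliberately omits: the text states the proposition without proof, remarking only that it ``is just a calculation that can be found in \cite{hou}, App.~A.'' Your argument is a sound, self-contained version of that calculation. The two essential steps are both handled properly: (i) the invariance of the right-hand side of (\ref{khaler}) under $|\psi\rangle\mapsto\lambda|\psi\rangle$ (the $d\lambda$ cross-terms produced by the first summand are exactly reproduced by the second, and the $|d\lambda|^2/|\lambda|^2$ terms cancel), which legitimizes evaluating on the holomorphic section $|\psi(\mathbf z)\rangle=|0\rangle+\sum_k z^k|k\rangle$ of the affine chart; and (ii) the identification of the resulting mixed-type tensor with $\partial_i\partial_{\bar j}\log\langle\psi|\psi\rangle\,dz^i d\bar z^j$, which, since $\langle\psi|\psi\rangle=1+z_i\bar z^i$, is the Fubini--Study potential displayed after (\ref{fs}). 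Either step (i)+(ii) or a direct substitution into (\ref{khaler}) reproduces the matrix $(g_{i\bar j})$ verbatim, so the conclusion follows. Your closing remark --- that for a unit-norm representative the QGT becomes $\langle d\psi|d\psi\rangle-|\langle\psi|d\psi\rangle|^2$, i.e.\ the horizontal part of the round metric under the Hopf fibration $S^{2N+1}\to\bP^N(\C)$ --- is a valid alternative route and dovetails nicely with the paper's earlier identification of the $1$-qubit state space with the Bloch/Riemann sphere; it is arguably the more conceptual argument, while the chart computation is the one that matches the explicit formula (\ref{fs}) given in the text. The only cosmetic caveat is the up/down index convention in (\ref{fs}), which you already flag as bookkeeping.
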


We now show that the QGT is strictly related with $F_Q$ the quantum
Fisher (see \cite{fkm} for more details).

\begin{proposition}
The real part of the Quantum Geometric Tensor is equivalent to the metric defined by the quantum Fisher $F_Q$
symmetric operator on rays in $\cH$. 
\end{proposition}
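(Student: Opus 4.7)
The plan is to reduce the statement to a direct computation on pure states (which is what rays in $\cH$ correspond to) and then match the two expressions term-by-term. Choose a smooth unit-norm representative $|\psi(\theta)\rangle$ of a ray, so that $\rho_\psi = |\psi\rangle\langle\psi|$ is a rank-one density operator whose spectral data in any ON basis $\{|\psi_j\rangle\}$ extending $|\psi_1\rangle := |\psi\rangle$ is $p_1 = 1$, $p_j = 0$ for $j > 1$. The earlier Proposition giving formula \eqref{prop-qfisher} for $F_Q$ can then be specialized: the first sum vanishes because $\partial_\theta p_1 = 0$, and in the second sum only the terms with $i = 1$ and $j > 1$ survive (the $i=j=1$ contribution kills itself through $(p_1 - p_1)^2$). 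This collapses $F_Q$ to
\[
F_Q \;=\; 4 \sum_{j > 1} |\langle \psi_1 | \partial_\theta \psi_j \rangle|^2.
\]

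Next, I would convert this into a coordinate-free expression involving only $|\psi\rangle$. Differentiating the orthonormality relation $\langle \psi_1 | \psi_j \rangle = 0$ gives $\langle \psi_1 | \partial_\theta \psi_j \rangle = -\langle \partial_\theta \psi_1 | \psi_j \rangle$, hence $|\langle \psi_1 | \partial_\theta \psi_j \rangle|^2 = |\langle \partial_\theta \psi_1 | \psi_j \rangle|^2$. Completing the sum over all $j$ by adding and subtracting the $j = 1$ term, and using Parseval, yields
\[
F_Q \;=\; 4\bigl( \langle \partial_\theta \psi | \partial_\theta \psi \rangle - |\langle \psi | \partial_\theta \psi \rangle|^2 \bigr).
\]
On the other side, evaluating the definition \eqref{khaler} on the same normalized representative (so $\langle \psi | \psi \rangle = 1$) gives
\[
ds^2 \;=\; \langle \partial_\theta \psi | \partial_\theta \psi \rangle \, d\theta^2 - \langle \partial_\theta \psi | \psi \rangle \langle \psi | \partial_\theta \psi \rangle \, d\theta^2.
\]
Both summands are manifestly real (the second is a squared modulus $|\langle \psi | \partial_\theta \psi \rangle|^2$), so the real part coincides with the full expression, and $F_Q = 4\,\mathrm{Re}(ds^2)$, as required. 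The generalization to several parameters $\theta_\mu$ replaces the product $\partial_\theta \otimes \partial_\theta$ by the symmetrized pairing $\tfrac{1}{2}(\partial_\mu \otimes \partial_\nu + \partial_\nu \otimes \partial_\mu)$, and the same computation identifies the symmetric operator $F_Q^{\mu\nu}$ with $4\,\mathrm{Re}(G_{\mu\nu})$.

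The main obstacle is verifying that this identification actually descends to the projective space, i.e.\ that it is independent of the phase representative chosen for the ray. Under the gauge transformation $|\psi\rangle \mapsto e^{i\alpha(\theta)}|\psi\rangle$ one finds, after a short calculation, that $\langle \partial_\theta \psi | \partial_\theta \psi \rangle$ picks up extra terms proportional to $(\partial_\theta \alpha)^2$ and to $\mathrm{Im}\langle \psi | \partial_\theta \psi \rangle$, and that $|\langle \psi | \partial_\theta \psi \rangle|^2$ picks up exactly the same extra terms; the difference, and therefore both $F_Q$ and $\mathrm{Re}(ds^2)$, is invariant. Combined with the fact that on pure states $\rho$ itself is phase-invariant (so the definition of $F_Q$ did not depend on the choice either), this shows that the equality $F_Q = 4\,\mathrm{Re}(ds^2)$ is well-defined on $\bP^N(\C)$, proving that the two metrics are equivalent (equal up to the universal factor $4$).
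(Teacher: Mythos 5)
Your proof is correct, and it reaches the identity $F_Q = 4\bigl(\langle d\psi|d\psi\rangle - |\langle\psi|d\psi\rangle|^2\bigr)$ by a genuinely different route from the paper. The paper's proof is operator-theoretic: it observes that for a pure state $\rho=|\psi\rangle\langle\psi|$ the idempotency $\rho^2=\rho$ forces $L=2\,d\rho$ to solve the defining equation $d\rho=\tfrac12(L\rho+\rho L)$, and then evaluates $F_Q=\mathrm{tr}[\rho L^2]=4\,\mathrm{tr}[\rho\,(d\rho)^2]$ directly, with no choice of basis (the final displayed expression in the paper is garbled --- it writes an operator where a scalar should appear --- but the intended computation is exactly the trace you would obtain). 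You instead specialize the previously proven eigenbasis formula \eqref{prop-qfisher} to the rank-one case $p_1=1$, $p_j=0$ for $j>1$, and resum via Parseval; this is a legitimate and arguably more systematic use of the paper's own machinery, since it exhibits the pure-state formula as the degenerate limit of the general mixed-state expression. (Two small points to keep in mind: the vanishing of the classical term uses that $p_1\equiv 1$ identically along the ray, and the collapse of the double sum to $j>1$ tacitly uses the convention, already fixed in the paper's Observation on $L_{ij}$, that the matrix elements of $L$ between zero-eigenvalue vectors are set to zero.) Your approach also buys two things the paper's proof omits: an explicit check that the answer is independent of the ON completion of $|\psi\rangle$ to an eigenbasis (via Parseval), and the verification of invariance under the phase gauge $|\psi\rangle\mapsto e^{i\alpha(\theta)}|\psi\rangle$, which is what actually makes the equality well defined on $\bP^N(\C)$ rather than on the sphere of unit vectors; your cancellation of the $(\partial_\theta\alpha)^2$ and $2\,\partial_\theta\alpha\,\mathrm{Im}\langle\psi|\partial_\theta\psi\rangle$ terms is correct. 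The remark on the several-parameter case is also apt: there the full tensor $G_{\mu\nu}$ acquires an antisymmetric imaginary part (the Berry curvature), so taking the real part in the statement is no longer vacuous, whereas in the single-parameter computation both sides are automatically real.
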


\begin{proof}
{For pure states $\rho=|\psi\rangle\langle\psi|$, it is easy to check that $L=2d\rho = 2(  |d\psi\rangle\langle\psi|+|\psi\rangle\langle d\psi|)$. From a straightforward calculation it follows that
$F_Q = 4 Tr [ \rho (d\rho)^2] =  4 \left(   |d\psi\rangle\langle\psi|+|\psi\rangle\langle d\psi| \right)$, which reproduces eq. (\ref{khaler}) when the vector has unit norm. }
\end{proof}

\section{\texorpdfstring{$ZX$}{ZX}-calculus and Hopf algebras} \label{zx-sec}

		In Sec. \ref{qlog-sec} we introduced the quantum logic gates.
		They are used for \textit{quantum circuits}, which consist
		in a sequence of quantum logic gates and measurements, though
		in the present treatment we shall ignore
		the latter. 
		Fig \ref{fig:circ} represents a quantum circuit.
		
		\begin{figure}[h!]
			\begin{center}
				\includegraphics[width=0.95\textwidth]{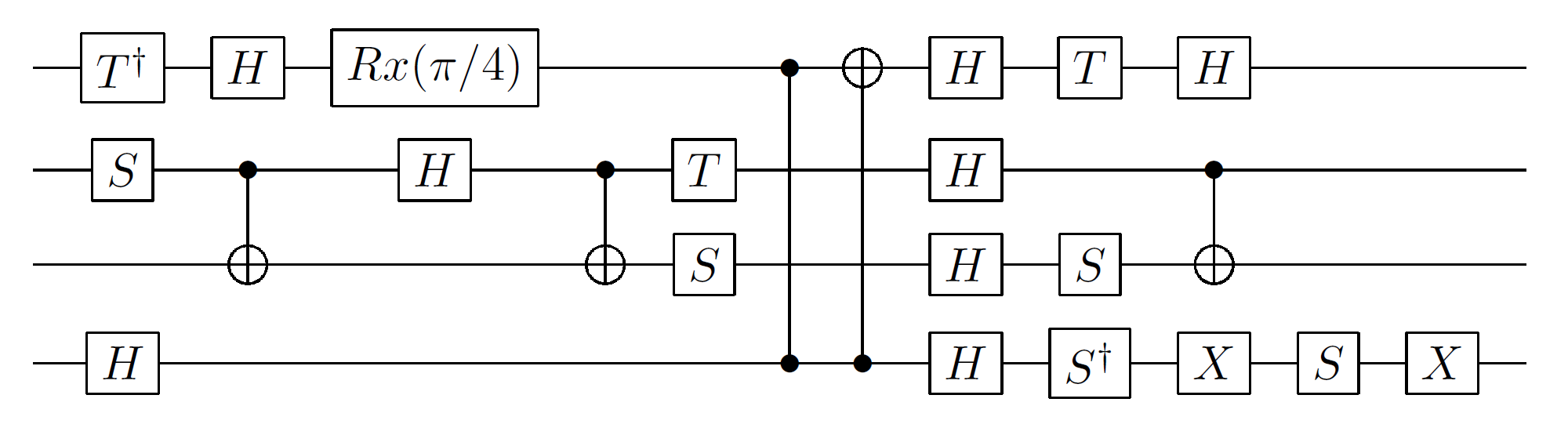}
				\caption{$X$, $Y$, $Z$ are the Pauli matrices gates, $H$ the Hadamard
					gate. The $S$ gate is $e^{i\pi/2}Z$, the $T$ gate is $e^{i\pi/4}Z$,
					$\dagger$ as usual indicates the adjoint.}
				\label{fig:circ}
			\end{center}
		\end{figure}
		
		In this section we want to show how to represent quantum
		circuits via $ZX$-diagrams, and how $ZX$-calculus,
		originally developed in \cite{cd1} \cite{CoeckeDuncan}, can be effectively used to understand and
		optimize simple quantum circuits. Our main references are \cite{MajidZX}
		and the expository treatment \cite{zxcalc} (see also the complete
		list of refs. in \cite{zxcalc}).
		We shall also explore the intriguing connection between the language of
		$ZX$-calculus and quantum groups, as expressed in  \cite{MajidZX}.
		
		\subsection{Quantum circuits and \texorpdfstring{$ZX$}{ZX} diagrams}
		We first establish the conventions that allow us to translate
		a quantum circuit into a \textit{$ZX$-diagram}. The main components of a $ZX$-diagram
		are green (or $Z$) and red (or $X$)
		\textit{spiders} operating on qubits as unitary operators as follows: 
		\begin{equation}\label{spiders}
			\begin{ZX}[row sep=5mm]
				\zxN{} \arrow[drr,bend right] & \zxN{} \arrow[dr,bend right] & \ldots & \zxN{} \arrow[dl,bend left] & \zxN{} \arrow[dll,bend left] \\
				& & \zxZ{\alpha} \arrow[dll, bend right] \arrow[dl, bend right] \arrow[drr, bend left] \arrow[dr, bend left] & & \\
				\zxN{} & \zxN{} & \ldots & \zxN{} & \zxN{}
			\end{ZX}\quad\colon \cH_n\to \cH_m\quad,\quad
			\begin{cases}
				|0\ldots 0\rangle & \mapsto |0\ldots 0\rangle\\
				|1\ldots 1\rangle & \mapsto e^{\mathrm{i}\alpha}|1\ldots 1\rangle\\
				\text{others} & \mapsto 0
			\end{cases}
		\end{equation}
		\begin{equation}\label{spiders2}
			\begin{ZX}[row sep=5mm]
				\zxN{} \arrow[drr,bend right] & \zxN{} \arrow[dr,bend right] & \ldots & \zxN{} \arrow[dl,bend left] & \zxN{} \arrow[dll,bend left] \\
				& & \zxX{\alpha} \arrow[dll, bend right] \arrow[dl, bend right] \arrow[drr, bend left] \arrow[dr, bend left] & & \\
				\zxN{} & \zxN{} & \ldots & \zxN{} & \zxN{}
			\end{ZX}\quad\colon \cH_n\to \cH_m\quad,\quad
			\begin{cases}
				|+\ldots +\rangle & \mapsto |+\ldots +\rangle\\
				|-\ldots -\rangle & \mapsto e^{\mathrm{i}\alpha}|-\ldots -\rangle\\
				\text{others} & \mapsto 0
			\end{cases}
		\end{equation}
		where $\cH_n=\C^2\otimes \dots \otimes \C^2$ ($n$ times) and also $n=0$ or $m=0$ are allowed.
		If $\alpha=0$ we omit the symbol of $\alpha$.

		So, for example, we can establish the correspondence
		between elementary operations on 1 qubits and $ZX$-diagrams as in Fig. 
  \ref{fig:zx-elem}. We will detail the
 rules involved in the identifications in the next section. Notice
 that customarily the quantum circuits are read from left to right (see  Fig.
 \ref{fig:circ}), while the $ZX$-diagrams {are depicted} from top to bottom. We will follow
 these conventions.
		
            \begin{figure}
            \begin{center}
                \begin{tabular}{rll}
                \begin{ZX}[row sep=5mm]
                \zxZ{} \arrow[d]\\
                \zxN{}
                \end{ZX}
                & $=|0\rangle+|1\rangle$ & $=\sqrt{2}|+\rangle$ \\
                & & \\
                \begin{ZX}[row sep=5mm]
                \zxX{} \arrow[d]\\
                \zxN{}
                \end{ZX}
                & $=|+\rangle+|-\rangle$ & $=\sqrt{2}|0\rangle$ \\
                     & & \\
                \begin{ZX}[row sep=5mm]
                \zxN{} \arrow[d]\\
                \zxZ{}
                \end{ZX}
                & $=\langle 0|+\langle 1|$ & $=\frac{1}{\sqrt{2}}\langle +|$ \\
                & & \\
                \begin{ZX}[row sep=5mm]
                \zxN{} \arrow[d]\\
                \zxX{}
                \end{ZX}
                & $=\langle +|+\langle -|$ & $=\frac{1}{\sqrt{2}}\langle 0|$ \\
                     & & \\
                \begin{ZX}[row sep=5mm]
                \zxN{} \arrow[d]\\
                \zxZ{\alpha} \arrow[d]\\
                \zxN{}
                \end{ZX}
                & $=|0\rangle\langle 0|+e^{i\alpha}|1\rangle\langle 1|$ & $=:Z_\alpha$ \\
                     & & \\
                \begin{ZX}[row sep=5mm]
                \zxN{} \arrow[d]\\
                \zxX{\alpha} \arrow[d]\\
                \zxN{}
                \end{ZX}
                & $=|+\rangle\langle +|+e^{i\alpha}|-\rangle\langle -|$ & $=:X_\alpha$
                \end{tabular}
                \caption{Examples of elementary $1$ qubit operations}
                \label{fig:zx-elem}
                \end{center}
            \end{figure}
		
		We further consider the \textit{Hadamard gate}
		\begin{equation*}
			\begin{ZX}[row sep=5mm]
				\zxN{} \arrow[d] \\
				\zxH{} \arrow[d] \\
				\zxN{}
			\end{ZX}\quad=\quad
			\frac{1}{\sqrt{2}}\begin{pmatrix}
				1 & 1 \\
				1 & -1
			\end{pmatrix}
		\end{equation*}
  which we represent as a yellow square in the circuit and the \textit{normalization constant} $\lozenge=\sqrt{2}$.
		Given a quantum circuit, we can then immediately draw the corresponding $ZX$-diagram.


		
		Vice-versa, given a $ZX$-diagram, one can write the quantum circuit
		associated with it (see \cite{bkw} for a full discussion of such reconstruction).
  This means that the rules of the $ZX$-calculus are complete and that we can use them to study fundamental problems, such as quantum circuit optimization and or quantum error correction codes.
		
		\medskip
		There are some mathematically obvious rules regarding quantum circuits,
		that can be translated into operations on the spiders that model them. This originates
		the $ZX$-calculus, that we describe below.
		Such calculus allows to simplify a $ZX$-diagram and then translate such simpler version to a
		quantum circuit {\sl equivalent} to the starting one. This allows for a more
		effective and efficient realization of simple quantum circuits.
		
		\subsection{The \texorpdfstring{$ZX$}{ZX}-calculus}
				
				The $ZX$-calculus is a graphical language to describe linear maps $\cH_n\to \cH_m$ on qubits. 
				For example, the identity $\mathrm{id}_\cH \colon \cH\to \cH$, ($\cH=\C^2$),
				is written as a single string, while the identity $\mathrm{id}_{\cH_2}\colon \cH_2\to \cH_2$ on $\cH_2=\cH\otimes \cH$ corresponds to two strings
				\begin{equation*}
					\mathrm{id}_\cH=\begin{pmatrix}
						1 & 0 \\
						0 & 1
					\end{pmatrix}
					=\quad\begin{ZX}[row sep=5mm]
						\zxN{} \arrow[ddd]\\
						\\
						\\
						\zxN{}
					\end{ZX}\qquad,\qquad
					\mathrm{id}_{\cH^2}=\begin{pmatrix}
						1 & 0 & 0 & 0 \\
						0 & 1 & 0 & 0 \\
						0 & 0 & 1 & 0 \\
						0 & 0 & 0 & 1
					\end{pmatrix}
					=\quad\begin{ZX}[row sep=5mm]
						\zxN{} \arrow[ddd]\\
						\\
						\\
						\zxN{}
					\end{ZX}\quad\begin{ZX}[row sep=5mm]
						\zxN{} \arrow[ddd]\\
						\\
						\\
						\zxN{}
					\end{ZX}\qquad,
				\end{equation*}
				flowing from top to bottom. Some non-trivial operations are given by the ``spiders" with an angle $\alpha\in\mathbb{R}$.
				
				We now list some of such operation, forming the backbone of $ZX$-calculus rules. For this we follow \cite[Section 2.2]{CoeckeDuncan}.

\subsubsection*{The \textbf{T}-rule, or, ``only topology matters"}

This rule postulates that we can straighten, stretch, bend or twist wires without changing the linear operation represented by the $ZX$-diagram. For example
\begin{equation*}
\begin{ZX}[row sep=5mm]
\zxN{} \arrow[d] & [5mm] & [5mm] \\
\zxN{} \arrow[dr,bend right] & & \\
& \zxN{} \arrow[dr,bend left] & \\
& & \zxN{} \arrow[d] \\
& & \zxN{}
\end{ZX}\quad=\quad
\begin{ZX}[row sep=5mm]
\zxN{} \arrow[d] \\
\zxN{} \arrow[d] \\
\zxN{} \arrow[d] \\
\zxN{} \arrow[d] \\
\zxN{}
\end{ZX}\qquad\text{ or }\qquad
\begin{ZX}[row sep=5mm]
\zxN{} \arrow[d] & & & & \zxN{} \arrow[d] \\
\zxN{} \arrow[drr,bend right] & & & & \zxN{} \arrow[dll,bend left] \\
& & \zxN{} \arrow[drr,bend left] \arrow[dll,bend right] & & \\
\zxN{} \arrow[d] & & & & \zxN{} \arrow[d] \\
\zxN{} & & & & \zxN{}
\end{ZX}\quad=\quad
\begin{ZX}[row sep=5mm]
\zxN{} \arrow[d] & & & \zxN{} \arrow[d] \\
\zxN{} \arrow[d] & & & \zxN{} \arrow[d] \\
\zxN{} \arrow[d] & & & \zxN{} \arrow[d] \\
\zxN{} \arrow[d] & & & \zxN{} \arrow[d] \\
\zxN{} & & & \zxN{}
\end{ZX}
\end{equation*}
It is crucial that the wires are not detached from their entry or exit points, while their position can be changed, also in relation to other wires. This also applies to points of connection with green and red vertices or Hadamard gates. In particular, the number of incoming and outgoing wires are invariants under the \textbf{T}-rule.

\subsubsection*{The \textbf{S}-rules}

In a nutshell, the \textbf{S}-rules entail that all operations involving only red or green vertices correspond to red or green spiders, where only the incoming and outgoing wires matter and we sum the phases.

\begin{center}
	\includegraphics{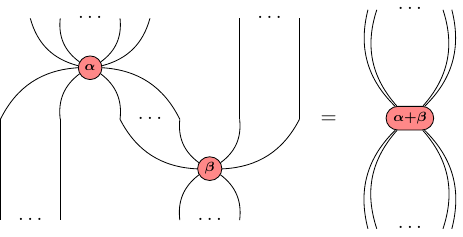}
\end{center}
\begin{center}
	\includegraphics{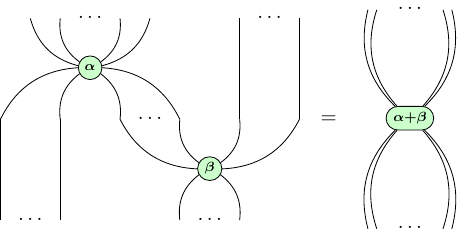}
\end{center}

As a consequence of the \textbf{T} and \textbf{S} rules it follows that $\mathcal{H}$ inherits two algebra and two coalgebra structures.

\begin{corollary}\label{prop-un}
The $1$-qubit space $\mathcal{H}$ is an algebra, where

\medskip\noindent
\begin{center}
        $m_{\textcolor{red}\bullet}=$\begin{ZX}[row sep=5mm]
	\zxN{} \arrow[dr,bend right] & [5mm] & [5mm]
	\zxN{} \arrow[dl,bend left]\\
	& \zxX{} \arrow[d] &\\
	& \zxN{} &
        \end{ZX}\qquad
      and\qquad 
      $m_{\textcolor{green}\bullet}=$\begin{ZX}[row sep=5mm]
        \zxN{} \arrow[dr,bend right] & [5mm] & [5mm]
	\zxN{} \arrow[dl,bend left]\\
	& \zxZ{} \arrow[d] &\\
	& \zxN{} &
	\end{ZX}
\end{center}
 
\medskip\noindent
are associative multiplication operations with unit operations 

\medskip\noindent
\begin{center}
$\eta_{\textcolor{red}\bullet}=$\begin{ZX}[row sep=5mm]
\zxX{} \arrow[d]\\
\zxN{}
\end{ZX}\qquad
and\qquad
$\eta_{\textcolor{green}\bullet}=$\begin{ZX}[row sep=5mm]
\zxZ{} \arrow[d]\\
\zxN{}
\end{ZX},
\end{center}

\medskip\noindent
respectively.

\medskip\noindent
Moreover, $\mathcal{H}$ is a coalgebra, where

\medskip\noindent
\begin{center}
$\Delta_{\textcolor{green}{\bullet}}:=\begin{ZX}[row sep=5mm]
& [5mm] \zxN{} \arrow[d,] & [5mm]\\
& \zxZ{} \arrow[dl,bend right] \arrow[dr,bend left] & \\
\zxN{} & & \zxN{}
\end{ZX}$\qquad and\qquad
$\Delta_{\textcolor{red}{\bullet}}:=\begin{ZX}[row sep=5mm]
& [5mm] \zxN{} \arrow[d,] & [5mm]\\
& \zxX{} \arrow[dl,bend right] \arrow[dr,bend left] & \\
\zxN{} & & \zxN{}
\end{ZX}$
\end{center}

\medskip\noindent
are coassociative comultiplication operations with counit operations

\begin{center}
$\varepsilon_{\textcolor{green}{\bullet}}:=
\begin{ZX}[row sep=5mm]
\zxN{} \arrow[d]\\
\zxZ{}
\end{ZX}$\qquad and\qquad $\varepsilon_{\textcolor{red}{\bullet}}:=
\begin{ZX}[row sep=5mm]
\zxN{} \arrow[d]\\
\zxX{}
\end{ZX}$, 
\end{center}

\medskip\noindent
respectively. 
\end{corollary}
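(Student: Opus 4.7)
The plan is to verify the four structural laws---associativity, unit, coassociativity, and counit---as direct consequences of the $\textbf{S}$-rule (spider fusion) together with the $\textbf{T}$-rule (only topology matters). The red and green arguments are strictly parallel, so I would present the details for the green structures and note that swapping colors produces the red half of the statement.

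For associativity of $m_{\textcolor{green}{\bullet}}$, the two $ZX$-diagrams representing $m_{\textcolor{green}{\bullet}} \circ (m_{\textcolor{green}{\bullet}} \otimes \id_{\cH})$ and $m_{\textcolor{green}{\bullet}} \circ (\id_{\cH} \otimes m_{\textcolor{green}{\bullet}})$ each consist of three incoming wires, two green spiders joined by a single internal wire, and one outgoing wire. By the $\textbf{S}$-rule both fuse into a single $(3,1)$ green spider with phase $0+0=0$, so they represent the same linear map $\cH^{\otimes 3}\to \cH$. For the left unit law $m_{\textcolor{green}{\bullet}} \circ (\eta_{\textcolor{green}{\bullet}} \otimes \id_{\cH}) = \id_{\cH}$, the $(0,1)$ spider $\eta_{\textcolor{green}{\bullet}}$ feeds into one input of the $(2,1)$ spider $m_{\textcolor{green}{\bullet}}$; by $\textbf{S}$ these two green vertices fuse into a single $(1,1)$ green spider of phase $0$, which by (\ref{spiders}) sends $|0\rangle \mapsto |0\rangle$ and $|1\rangle \mapsto e^{i\cdot 0}|1\rangle = |1\rangle$, i.e.\ it equals $\id_{\cH}$. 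The right unit law is identical by horizontal reflection.

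For the coalgebra part, the $\textbf{T}$-rule tells us that the coassociativity diagrams are just the vertical flips of the associativity diagrams, and another application of the $\textbf{S}$-rule reduces both sides to a single $(1,3)$ green spider of phase $0$. The counit law $(\varepsilon_{\textcolor{green}{\bullet}} \otimes \id_{\cH}) \circ \Delta_{\textcolor{green}{\bullet}} = \id_{\cH}$ follows in the same way: the $(1,0)$ spider $\varepsilon_{\textcolor{green}{\bullet}}$ fuses with the $(1,2)$ comultiplication into a single $(1,1)$ green spider of phase $0$, which is $\id_{\cH}$; the right counit law is again symmetric. Replacing green by red verbatim gives the red algebra and coalgebra structures.

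I do not anticipate any serious obstacle: the corollary is essentially the $\textbf{S}$-rule rephrased in the categorical vocabulary of (co)algebras, so the proof is bookkeeping of external wires and zero phases. The one point that merits a moment of care is confirming that the $\textbf{S}$-rule as displayed in the excerpt covers the degenerate cases in which one of the two fused spiders has no external wires on one side (the $(0,1)$ unit and the $(1,0)$ counit); once this is granted, each of the four laws is a single application of fusion.
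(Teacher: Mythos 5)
Your proof is correct and follows exactly the route the paper takes: the paper simply asserts that the (co)algebra axioms are special cases of the \textbf{S}-rules (spider fusion), and your write-up fills in the bookkeeping, correctly noting that each axiom reduces to fusing same-colored spiders into a single $(1,1)$ spider of phase $0$, which by \eqref{spiders} is $\mathrm{id}_{\cH}$. Your caveat about the degenerate $(0,1)$ and $(1,0)$ spiders is handled by the paper's convention that $n=0$ or $m=0$ is allowed in the spider definition, so nothing is missing.
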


The corresponding (co)algebra axioms, detailed in Appendix \ref{apendixA}, are in fact special cases of the \textbf{S}-rules. 


\subsubsection*{The \textbf{B}-rules}

As a consequence of the previous \textbf{S}-rules we have constructed an associative unital algebra structure $(m_{\textcolor{red}\bullet},\eta_{\textcolor{red}\bullet})$ and a coassociative counital coalgebra structure $(\Delta_{\textcolor{green}\bullet},\varepsilon_{\textcolor{green}\bullet})$ {on $\mathcal{H}$} (see Cor. \ref{prop-un}).
The \textbf{B}-rules will relate such structures with each other to realize an
unnormalized { Hopf algebra} (see {Appendix} \ref{apendixA}). Let us first state the rules, that are
manifest, once the significance of diagrams is translated into quantum circuits.
\begin{center}
	\includegraphics{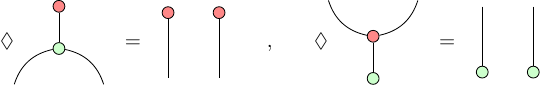}
\end{center}
\begin{center}
	\includegraphics{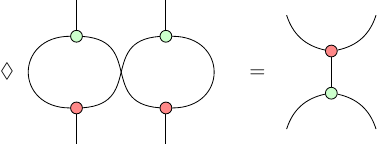}
\end{center}
\begin{center}
	\includegraphics{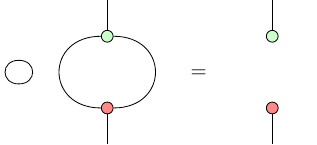}
\end{center}

\begin{proposition}[{\cite[Sect. 2.2.3]{CoeckeDuncan}}]\label{PropHopfAlg}
{Let the notation be as above. Then $(\mathcal{H},m_{\textcolor{red}\bullet},\eta_{\textcolor{red}\bullet},\Delta_{\textcolor{green}\bullet},\varepsilon_{\textcolor{green}\bullet})$ is an unnormalized Hopf algebra with trivial antipode.}
\end{proposition}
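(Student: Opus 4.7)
The plan is to build on Corollary \ref{prop-un}, which already supplies the two halves of the structure: an associative unital algebra $(\cH,m_{\textcolor{red}\bullet},\eta_{\textcolor{red}\bullet})$ and a coassociative counital coalgebra $(\cH,\Delta_{\textcolor{green}\bullet},\varepsilon_{\textcolor{green}\bullet})$. What remains in order to obtain a bialgebra is the compatibility between multiplication and comultiplication, and then the antipode axiom for the choice $S=\id_{\cH}$. I would verify these by reading them directly off the three diagrammatic \textbf{B}-rules displayed just above the proposition statement, together with the observation that the \textbf{T}- and \textbf{S}-rules already force each spider to be symmetric under arbitrary permutations of its inputs and outputs.

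For the bialgebra conditions, the ``delta\_unit'' diagram encodes the two equalities $\Delta_{\textcolor{green}\bullet}\circ\eta_{\textcolor{red}\bullet}=\eta_{\textcolor{red}\bullet}\otimes\eta_{\textcolor{red}\bullet}$ and $\varepsilon_{\textcolor{green}\bullet}\circ\eta_{\textcolor{red}\bullet}=\id_{\C}$, saying that the red unit is grouplike with respect to the green coalgebra. The ``delta\_algebra'' diagram encodes the main Hopf compatibility $\Delta_{\textcolor{green}\bullet}\circ m_{\textcolor{red}\bullet}=(m_{\textcolor{red}\bullet}\otimes m_{\textcolor{red}\bullet})\circ(\id\otimes\tau\otimes\id)\circ(\Delta_{\textcolor{green}\bullet}\otimes\Delta_{\textcolor{green}\bullet})$, together with $\varepsilon_{\textcolor{green}\bullet}\circ m_{\textcolor{red}\bullet}=\varepsilon_{\textcolor{green}\bullet}\otimes\varepsilon_{\textcolor{green}\bullet}$. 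Each identity is obtained by evaluating the two sides of the pictured equation on the bases (\ref{spiders}) and (\ref{spiders2}) and using the \textbf{T}-rule to straighten the wire configuration into canonical form.

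For the antipode, the ``antipode'' diagram states that a green spider with one input and one output followed by a red spider with one input and one output collapses to the scalar map $\eta_{\textcolor{red}\bullet}\circ\varepsilon_{\textcolor{green}\bullet}$, that is, $m_{\textcolor{red}\bullet}\circ\Delta_{\textcolor{green}\bullet}=\eta_{\textcolor{red}\bullet}\circ\varepsilon_{\textcolor{green}\bullet}$. Rewriting the left-hand side as $m_{\textcolor{red}\bullet}\circ(\id\otimes\id)\circ\Delta_{\textcolor{green}\bullet}$ gives precisely the defining antipode relation with $S=\id_{\cH}$. Because commutativity of $m_{\textcolor{red}\bullet}$ and cocommutativity of $\Delta_{\textcolor{green}\bullet}$ both follow from the \textbf{T}-rule (permuting wires at a spider is a symmetry), the two axioms $m(S\otimes\id)\Delta=\eta\varepsilon=m(\id\otimes S)\Delta$ collapse to the single identity just obtained, so no further work is needed.

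The main obstacle I anticipate is bookkeeping the scalar factor $\lozenge=\sqrt{2}$ that enters whenever an output of a green spider is fed into a red spider, which reflects the change of basis $|\pm\rangle=(|0\rangle\pm|1\rangle)/\sqrt{2}$. Evaluated as honest linear maps in the computational basis, the two sides of each \textbf{B}-rule agree only up to some power of $\sqrt{2}$, and this is exactly what the qualifier \emph{unnormalized} is designed to absorb. I would conclude the proof by checking that the relevant powers of $\sqrt{2}$ on the two sides of each axiom match up, so that a uniform rescaling (bookkept by $\lozenge$) eliminates them; once this scalar reconciliation is done, the Hopf-algebra axioms reduce to the already established diagrammatic identities, yielding the unnormalized Hopf algebra with trivial antipode.
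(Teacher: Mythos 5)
Your proposal is correct and follows essentially the same route as the paper: read the three \textbf{B}-rules as the (unnormalized) bialgebra compatibilities between $(m_{\textcolor{red}\bullet},\eta_{\textcolor{red}\bullet})$ and $(\Delta_{\textcolor{green}\bullet},\varepsilon_{\textcolor{green}\bullet})$, and observe that the antipode axiom \eqref{antipode} then holds with $S=\mathrm{id}$. You are in fact more explicit than the paper about tracking the powers of $\lozenge=\sqrt{2}$ and about why (co)commutativity collapses the two antipode identities into one, which the paper leaves implicit.
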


\begin{proof}
{
The first diagram above states that $\Delta_{\textcolor{green}\bullet}$ is unital with respect to $\eta_{\textcolor{red}\bullet}$ up to the normalization constant $\lozenge$, while the second diagram states that $m_{\textcolor{red}\bullet}$ is counital with respect to $\varepsilon_{\textcolor{green}\bullet}$ up to the same normalization constant $\lozenge$. The third diagram states that $\Delta_{\textcolor{green}\bullet}$ respects the product $m_{\textcolor{red}\bullet}$ (up to the constant $\lozenge$). In this situation the antipode axiom \eqref{antipode} with respect to the trivial antipode is automatically satisfied, yielding an unnormalized Hopf algebra.}
\end{proof}
Thus, the previous proposition gives an unexpected connection between quantum computing and Hopf algebra (quantum group) theory. We continue to describe the algebraic structure in more detail.
One can show that also $\Delta_{\textcolor{red}\bullet}$ respects the product $m_{\textcolor{green}\bullet}$ (up to the same normalization constant $\lozenge$), which implies that also $(m_{\textcolor{green}\bullet},\eta_{\textcolor{green}\bullet},\Delta_{\textcolor{red}\bullet},\varepsilon_{\textcolor{red}\bullet})$ is an unnormalized Hopf algebra. 
{A natural question is whether $(m_{\textcolor{green}\bullet},\eta_{\textcolor{green}\bullet},\Delta_{\textcolor{green}\bullet},\varepsilon_{\textcolor{green}\bullet})$ or $(m_{\textcolor{red}\bullet},\eta_{\textcolor{red}\bullet},\Delta_{\textcolor{red}\bullet},\varepsilon_{\textcolor{red}\bullet})$ are unnormalized Hopf algebras. This is not the case in general (in fact, only in trivial situations). Instead, the green (or red) algebra and coalgebra together form an \textit{$F$-algebra}, with compatibility condition discussed in Definition \ref{defFAlg}. Together with Proposition \ref{PropHopfAlg}, this is the data of an \textit{F}-Hopf algebra, see Appendix \ref{appFHopf}.}


\begin{proposition}[{\cite{cd1}}]
The pair
					$$
					(A,m_{\textcolor{red}{\bullet}},\eta_{\textcolor{red}{\bullet}},\Delta_{\textcolor{red}{\bullet}},\varepsilon_{\textcolor{red}{\bullet}})\qquad,\qquad
					(A,m_{\textcolor{green}{\bullet}},\eta_{\textcolor{green}{\bullet}},\Delta_{\textcolor{green}{\bullet}},\varepsilon_{\textcolor{green}{\bullet}})
					$$
					of $F$-algebras forms an
    $F$-Hopf algebra.
\end{proposition}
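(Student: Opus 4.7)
The plan is to verify the axioms of an $F$-Hopf algebra by invoking the rules already established in the paper. An $F$-Hopf algebra (as defined in Appendix \ref{appFHopf}) consists of two pieces of data that must be checked separately: first, that each of the two same-color quadruples forms an $F$-algebra (Definition \ref{defFAlg}), which is a Frobenius-type compatibility between multiplication and comultiplication of the same color; second, that the cross-color structures yield unnormalized Hopf algebras, as in Proposition \ref{PropHopfAlg} and its color-swapped analogue.

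First I would handle the $F$-algebra axiom for the red quadruple $(A,m_{\textcolor{red}{\bullet}},\eta_{\textcolor{red}{\bullet}},\Delta_{\textcolor{red}{\bullet}},\varepsilon_{\textcolor{red}{\bullet}})$ and then, by symmetry, for the green one. The key observation is that the $F$-algebra compatibility is an identity of $ZX$-diagrams built solely from red (respectively green) spiders, all carrying phase zero. By the \textbf{S}-rule, any connected subdiagram of same-color spiders with $n$ inputs and $m$ outputs collapses to a single spider of the same color with $n$ inputs, $m$ outputs, and total phase equal to the sum of the phases. Each side of the Frobenius axiom has the same number of inputs and outputs and forms a single connected component; hence both sides reduce to the same canonical spider, and the $F$-algebra identities hold on the nose. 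Since the \textbf{T}-rule further guarantees that only the connectivity matters, no additional diagrammatic bookkeeping is required.

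Next I would invoke the cross-color part. Proposition \ref{PropHopfAlg} establishes that the quadruple $(A,m_{\textcolor{red}{\bullet}},\eta_{\textcolor{red}{\bullet}},\Delta_{\textcolor{green}{\bullet}},\varepsilon_{\textcolor{green}{\bullet}})$ is an unnormalized Hopf algebra with trivial antipode, and the remark preceding the current Proposition states that the red/green swap $(A,m_{\textcolor{green}{\bullet}},\eta_{\textcolor{green}{\bullet}},\Delta_{\textcolor{red}{\bullet}},\varepsilon_{\textcolor{red}{\bullet}})$ likewise forms an unnormalized Hopf algebra, by the same \textbf{B}-rules with the roles of the colors interchanged. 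These are exactly the cross-color Hopf compatibilities demanded by the $F$-Hopf structure.

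The step I expect to be the main obstacle is keeping track of the normalization constant $\lozenge=\sqrt{2}$ that enters through the \textbf{B}-rules. Whereas the $F$-algebra axioms for each color are exact consequences of the \textbf{S}-rules, the cross-color compatibilities (the bialgebra-type equation, the counit/unit interactions, and the antipode axiom) hold only up to a power of $\lozenge$. One must therefore check that the definition of $F$-Hopf algebra in Appendix \ref{appFHopf} is formulated in the \emph{unnormalized} sense, with the same choice of scalar conventions as in Proposition \ref{PropHopfAlg}, so that the scalars produced by the two color swaps are mutually compatible. Once this convention is fixed, both unnormalized Hopf algebras fit together into a single $F$-Hopf algebra, completing the proof.
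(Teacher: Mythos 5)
Your proposal is correct and follows essentially the same route as the paper, which assembles the result from exactly these ingredients: the same-colour Frobenius ($F$-algebra) identities are instances of spider fusion (the \textbf{S}-rules), while the two cross-colour unnormalized Hopf structures are Proposition \ref{PropHopfAlg} and its colour-swapped analogue obtained from the \textbf{B}-rules. Your caveat about the normalization constant is well taken, since the appendix definition of an $F$-Hopf algebra is phrased for (normalized) Hopf algebras and one must invoke the rescaling $\Delta'_{\textcolor{green}{\bullet}}=\lozenge\Delta_{\textcolor{green}{\bullet}}$, $\varepsilon'_{\textcolor{green}{\bullet}}=\lozenge^{-1}\varepsilon_{\textcolor{green}{\bullet}}$, $S'=\lozenge^{-2}S$ recorded there to pass from the unnormalized to the normalized setting.
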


\subsubsection*{The \textbf{K}-rules}

The vertices with phase $\pi$ are "classical", in the sense that they can be copied according to the following rules
\begin{center}
	\includegraphics{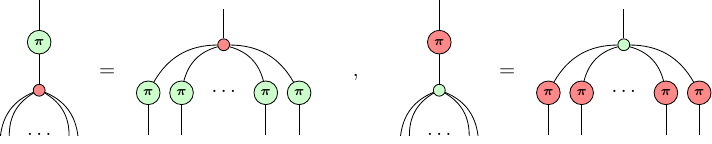}
\end{center}
Furthermore, the vertices of phase $\pi$ can be used to "invert" the phase of the other color, namely
\begin{center}
	\includegraphics{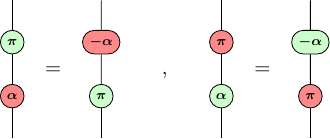}
\end{center}
These are known as the \textbf{K}-rules, where "K" stands for classical (\textit{klassisch} in German).

\subsubsection*{The \textbf{C}-rules}

The \textbf{C}-rules regard \textit{color}.
We can use the Hadamard gate to transform a green vertex into a red one. Explicitly,
\begin{center}
	\includegraphics{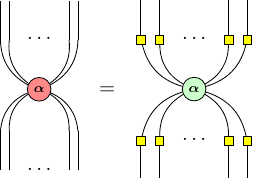}
\end{center}
In particular, the composition of two Hadamard gates yields the identity operation.

\subsubsection*{The \textbf{D}-rules}

Finally, the \textbf{D}-rules determine the normalization constant as the outcome of the following closed circuits
\begin{center}
	\includegraphics{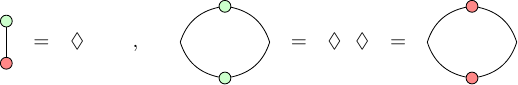}
\end{center}

\bigskip

We show how to apply the rules of the ZX calculus in order to simplify quantum circuits via the two examples in the figures \ref{fig:ZX-comp1} and \ref{ex2} below. Many more can be found in \cite{cd1,CoeckeDuncan}.

        \begin{figure}
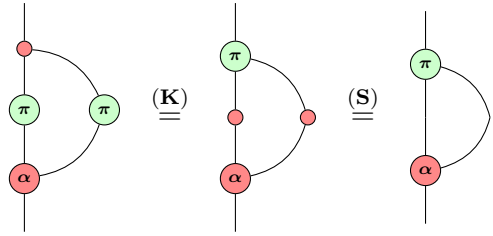

        \begin{center}
        \begin{ZX}[row sep=5mm]
        \zxN{} \arrow[d] & [5mm] \\
        \zxX{} \arrow[d] \arrow[dr,bend left] & \\
        \zxZ{\pi} \arrow[d] & \zxZ{\pi} \arrow[dl,bend left]\\
        \zxX{\alpha} \arrow[d] & \\
        \zxN{} &
        \end{ZX}\quad$\overset{(\textbf{K})}{=}$\quad
        \begin{ZX}[row sep=5mm]
        \zxN{} \arrow[d] & [5mm] \\
        \zxZ{\pi} \arrow[d] \arrow[dr,bend left] & \\
        \zxX{} \arrow[d] & \zxX{} \arrow[dl,bend left]\\
        \zxX{\alpha} \arrow[d] & \\
        \zxN{} &
        \end{ZX}\quad$\overset{(\textbf{S})}{=}$\quad
        \begin{ZX}[row sep=5mm]
        \zxN{} \arrow[d] & [5mm] \\
        \zxZ{\pi} \arrow[d] \arrow[dr,bend left] & \\
        \zxN{} \arrow[d] & \zxN{} \arrow[dl,bend left]\\
        \zxX{\alpha} \arrow[d] & \\
        \zxN{} &
        \end{ZX}
        \caption{Simplifying a ZX diagram}
        \label{fig:ZX-comp1}
        \end{center}
        \end{figure}

        \begin{figure}
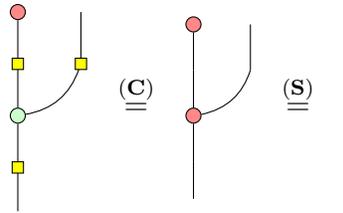

        \begin{center}
        \begin{ZX}[row sep=5mm]
        \zxX{} \arrow[d] & [5mm] \zxN{} \arrow[d] \\
        \zxH{} \arrow[d] & \zxH{} \arrow[dl,bend left]  \\
        \zxZ{} \arrow[d] & \\
        \zxH{} \arrow[d] & \\
        \zxN{} &
        \end{ZX}\quad$\overset{(\textbf{C})}{=}$\quad
        \begin{ZX}[row sep=5mm]
        \zxX{} \arrow[d] & [5mm] \zxN{} \arrow[d] \\
        \zxN{} \arrow[d] & \zxN{} \arrow[dl,bend left]  \\
        \zxX{} \arrow[d] & \\
        \zxN{} \arrow[d] & \\
        \zxN{} &
        \end{ZX}\quad$\overset{(\textbf{S})}{=}$\quad
        \begin{ZX}[row sep=5mm]
        \zxN{} \arrow[d] \\
        \zxN{} \arrow[d] \\
        \zxN{} \arrow[d] \\
        \zxN{} \arrow[d] \\
        \zxN{}
        \end{ZX}
        \caption{Simplifying another ZX diagram}
        \label{ex2}
        \end{center}
        \end{figure}

\appendix

\section{Unnormalized and \texorpdfstring{$F$}{F}-Hopf Algebras }
				\bigskip
				In this appendix, we recall some formal notion regarding unnormalised Hopf algebras
    and $F$-algebras. For more details on Hopf algebras, see \cite{montgomery},
    \cite{kassel}, \cite{majid-found}.
				
				\subsection{Unnormalized Hopf Algebras}\label{apendixA}
				
				We recall the notion of an unnormalized Hopf algebra over a field $\Bbbk$. The axioms are presented in a graphical language, with string diagrams to be read from top to bottom. 

				An \textit{unnormalized Hopf algebra} is a $\Bbbk$-vector space $A$, together with
				\begin{enumerate}
					\item[i.)] a multiplication and unit
					\begin{equation*}
						m_{\textcolor{red}{\bullet}}:=\begin{ZX}[row sep=5mm]
							\zxN{} \arrow[dr,bend right] & [5mm] & [5mm]
							\zxN{} \arrow[dl,bend left]\\
							& \zxX{} \arrow[d] &\\
							& \zxN{} &
						\end{ZX}\quad\colon A\otimes A\to A
						\qquad\qquad
						\eta_{\textcolor{red}{\bullet}}:=  \begin{ZX}[row sep=5mm]
							\zxX{} \arrow[d]\\
							\zxN{}
						\end{ZX}\quad\colon\Bbbk\to A
					\end{equation*}
					satisfying
					\begin{equation*}
						\begin{ZX}[row sep=5mm]
							\zxN{} \arrow[dr,bend right] & [5mm] & [5mm]
							\zxN{} \arrow[dl,bend left] & [5mm] \zxN{} \arrow[dd] \\
							& \zxX{} \arrow[d] & & \\
							& \zxN{} \arrow[dr,bend right] & & \zxN{} \arrow[dl,bend left]\\
							& & \zxX{} \arrow[d] &\\
							& & \zxN{} &
						\end{ZX}\quad=\quad
						\begin{ZX}[row sep=5mm]
							\zxN{} \arrow[dd] & [5mm] \zxN{} \arrow[dr,bend right] & [5mm] & [5mm]
							\zxN{} \arrow[dl,bend left]\\
							& & \zxX{} \arrow[d] &\\
							\zxN{} \arrow[dr,bend right] & & \zxN{} \arrow[dl,bend left] &\\
							& \zxX{} \arrow[d] & & \\
							& \zxN{} & & 
						\end{ZX}
						\qquad\qquad\text{Associativity}
					\end{equation*}
					and
					\begin{equation*}
						\begin{ZX}[row sep=5mm]
							\zxX{} \arrow[d] & [5mm] & [5mm] \zxN{} \arrow[d]\\
							\zxN{} \arrow[dr,bend right] & & 
							\zxN{} \arrow[dl,bend left]\\
							& \zxX{} \arrow[d] &\\
							& \zxN{} &
						\end{ZX}\quad=\quad
						\begin{ZX}[row sep=5mm]
							\zxN{} \arrow[ddd]\\
							\\
							\\
							\zxN{}
						\end{ZX}
						\quad=\quad\begin{ZX}[row sep=5mm]
							\zxN{} \arrow[d] & [5mm] & [5mm] \zxX{} \arrow[d]\\
							\zxN{} \arrow[dr,bend right] & & 
							\zxN{} \arrow[dl,bend left]\\
							& \zxX{} \arrow[d] &\\
							& \zxN{} &
						\end{ZX}\qquad\qquad\text{Unitality}
					\end{equation*}
					
					\item[ii.)] a comultiplication and counit
					\begin{equation*}
						\Delta_{\textcolor{green}{\bullet}}:=\begin{ZX}[row sep=5mm]
							& [5mm] \zxN{} \arrow[d,] & [5mm]\\
							& \zxZ{} \arrow[dl,bend right] \arrow[dr,bend left] & \\
							\zxN{} & & \zxN{}
						\end{ZX}\quad\colon A\to A\otimes A
						\qquad\qquad
						\varepsilon_{\textcolor{green}{\bullet}}:=   \begin{ZX}[row sep=5mm]
							\zxN{} \arrow[d]\\
							\zxZ{}
						\end{ZX}\quad\colon A\to\Bbbk    
					\end{equation*}
					satisfying
					\begin{equation*}
						\begin{ZX}[row sep=5mm]
							& [5mm] & [5mm] \zxN{} \arrow[d,] & [5mm]\\
							& & \zxZ{} \arrow[dl,bend right] \arrow[dr,bend left] & \\
							& \zxN{} \arrow[d] & & \zxN{} \arrow[dd]\\
							& \zxZ{} \arrow[dl,bend right] \arrow[dr,bend left] & &\\
							\zxN{} & & \zxN{} & \zxN{}
						\end{ZX}\quad=\quad
						\begin{ZX}[row sep=5mm]
							& [5mm] \zxN{} \arrow[d,] & [5mm] & [5mm]\\
							& \zxZ{} \arrow[dl,bend right] \arrow[dr,bend left] & & \\
							\zxN{} \arrow[dd] & & \zxN{} \arrow[d] &\\
							& & \zxZ{} \arrow[dl,bend right] \arrow[dr,bend left] & \\
							\zxN{} & \zxN{} & & \zxN{}
						\end{ZX}\qquad\qquad\text{Coassociativity}
					\end{equation*}
					and
					\begin{equation*}
						\begin{ZX}[row sep=5mm]
							& [5mm] \zxN{} \arrow[d,] & [5mm]\\
							& \zxZ{} \arrow[dl,bend right] \arrow[dr,bend left] & \\
							\zxN{} \arrow[d] & & \zxN{} \arrow[d]\\
							\zxZ{} & & \zxN{}
						\end{ZX}\quad=\quad
						\begin{ZX}[row sep=5mm]
							\zxN{} \arrow[ddd]\\
							\\
							\\
							\zxN{}
						\end{ZX}
						\quad=\quad\begin{ZX}[row sep=5mm]
							& [5mm] \zxN{} \arrow[d,] & [5mm]\\
							& \zxZ{} \arrow[dl,bend right] \arrow[dr,bend left] & \\
							\zxN{} \arrow[d] & & \zxN{} \arrow[d]\\
							\zxN{} & & \zxZ{}
						\end{ZX}\qquad\qquad\text{Counitality}    
					\end{equation*}
					
					\item[iii.)] a normalization constant $\lozenge\in\Bbbk\setminus\{0\}$ such that $\Delta$ and $\varepsilon$ are unnormalized algebra morphisms
					\begin{equation*}
						\lozenge\begin{ZX}[row sep=5mm]
							& [5mm] \zxX{} \arrow[d] & [5mm]\\
							& \zxZ{} \arrow[dl,bend right] \arrow[dr,bend left] & \\
							\zxN{} & & \zxN{}
						\end{ZX}\quad=\quad
						\begin{ZX}[row sep=5mm]
							\zxX{} \arrow[dd] & [5mm] \zxX{} \arrow[dd]\\
							& \\
							\zxN{} & \zxN{}
						\end{ZX}\qquad\qquad
						\lozenge\begin{ZX}[row sep=5mm]
							& [5mm] \zxN{} \arrow[d] & [5mm] & [5mm] \zxN{} \arrow[d] & [5mm]\\
							& \zxZ{} \ar[dd,IO,C] \arrow[s, ddrr] & & \zxZ{} \arrow[s, ddll] \ar[dd,IO,C-] &\\
							& & \zxN{} & & \\
							& \zxX{} \arrow[d] & & \zxX{} \arrow[d] &\\
							& \zxN{} & & \zxN{} &
						\end{ZX}\quad=\quad
						\begin{ZX}[row sep=5mm]
							\zxN{} \arrow[dr,bend right] & [5mm] & [5mm] \zxN{} \arrow[dl,bend left]\\
							& \zxX{} \arrow[d] &\\
							& \zxZ{} \arrow[dr,bend left] \arrow[dl,bend right] &\\
							\zxN{} & & \zxN{}
						\end{ZX}
					\end{equation*}
					~\\
					\begin{equation*}
						\begin{ZX}[row sep=5mm]
							\zxX{} \arrow[d]\\
							\zxZ{}
						\end{ZX}\quad=\quad
						\lozenge\qquad\qquad
						\lozenge\begin{ZX}[row sep=5mm]
							\zxN{} \arrow[dr,bend right] & [5mm] & [5mm] \zxN{} \arrow[dl,bend left]\\
							& \zxX{} \arrow[d] & \\
							& \zxZ{} &
						\end{ZX}\quad=\quad
						\begin{ZX}[row sep=5mm]
							\zxN{} \arrow[dd] & [5mm] \zxN{} \arrow[dd]\\
							& \\
							\zxZ{} & \zxZ{}
						\end{ZX}
					\end{equation*}
					
					\item[iv.)] and a linear map $S:=\begin{ZX}
						\zxN{} \arrow[d]\\
						\zxX{S} \arrow[d]\\
						\zxN{}
					\end{ZX}\colon A\to A$ satisfying the antipode axioms
					\begin{equation}\label{antipode}
                        \lozenge\quad\lozenge\quad
						\begin{ZX}[row sep=5mm]
							& [5mm] \zxN{} \arrow[d] & [5mm]\\
							& \zxZ{} \arrow[dl,bend right] \ar[dd,IO,C-] & \\
							\zxX{S} \arrow[dr,bend right] & &\\
							& \zxX{} \arrow[d] &\\
							& \zxN{} &
						\end{ZX}\quad=\quad
						\begin{ZX}[row sep=5mm]
							\zxN{} \arrow[d]\\
							\zxZ{}\\
							\zxX{} \arrow[d]\\
							\zxN{}
						\end{ZX}\quad=\quad
                            \lozenge\quad\lozenge
						\begin{ZX}[row sep=5mm]
							& [5mm] \zxN{} \arrow[d] & [5mm]\\
							& \zxZ{} \arrow[dr,bend left] \ar[dd,IO,C] & \\
							& & \zxX{S} \arrow[dl,bend left]\\
							& \zxX{} \arrow[d] &\\
							& \zxN{} &
						\end{ZX}
					\end{equation}
				\end{enumerate}
				The usual notion of (normalized) Hopf algebra is recovered by setting $\lozenge=1$. It is easy to see that, given an unnormalized Hopf algebra $(A,m_{\textcolor{red}{\bullet}},\eta_{\textcolor{red}{\bullet}},\Delta_{\textcolor{green}{\bullet}},\varepsilon_{\textcolor{green}{\bullet}},S)$, we obtain a normalized Hopf algebra $(A,m_{\textcolor{red}{\bullet}},\eta_{\textcolor{red}{\bullet}},\Delta'_{\textcolor{green}{\bullet}},\varepsilon'_{\textcolor{green}{\bullet}},S')$ by setting $\Delta'_{\textcolor{green}{\bullet}}:=\lozenge\Delta_{\textcolor{green}{\bullet}}$, $\varepsilon'_{\textcolor{green}{\bullet}}:=\lozenge^{-1}\varepsilon_{\textcolor{green}{\bullet}}$ and $S':=\lozenge^{-2}S$.
				
				\begin{example}\label{ExKG}
					Let $G$ be a group with neutral element $e\in G$. Then the group algebra $\Bbbk G$ becomes a (normalized) Hopf algebra with algebra structure $m_{\textcolor{red}{\bullet}}(g\otimes h):=gh$, $\eta_{\textcolor{red}{\bullet}}(1_\Bbbk):=e$, coalgebra structure $\Delta_{\textcolor{green}{\bullet}}(g)=g\otimes g$, $\varepsilon_{\textcolor{green}{\bullet}}(g)=1$ and antipode $S(g)=g^{-1}$ defined for all $g,h\in G$ and extended $\Bbbk$-linearly to $\Bbbk G$.
				\end{example}

				\subsection{\texorpdfstring{$F$}{F}-Hopf algebras}\label{appFHopf}
				
				\begin{definition}\label{defFAlg}
				A vector space $A$ is called \textit{$F$-algebra} if there is an associative unital algebra structure $(m=\begin{ZX}
						\zxN{} \arrow[dr, bend right] & & \zxN{} \arrow[dl, bend left]\\
						& \zxN{} \arrow[d] & \\
						& \zxN{}
						&\end{ZX},\eta)$ on $A$ and a coassociative counital coalgebra structure $(\Delta=
					\begin{ZX}[]
						& \zxN{} \arrow[d] & \\
						& \zxN{} \arrow[dl, bend right] \arrow[dr, bend left] & \\
						\zxN{} & & \zxN{}
					\end{ZX},\varepsilon)$ on $A$ such that
					\begin{equation*}
						\begin{ZX}[]
							& \zxN{} \arrow[d] & & & \zxN{} \arrow[d]\\
							& \zxN{} \arrow[dl, bend right] \arrow[dr, bend left] & & & \zxN{} \arrow[d]\\
							\zxN{} \arrow[d] & & \zxN{} \arrow[dr, bend right] & & \zxN{} \arrow[dl, bend left]\\
							\zxN{} \arrow[d] & & & \zxN{} \arrow[d] &\\
							\zxN{} & & & \zxN{} &
						\end{ZX}\quad=\quad
						\begin{ZX}[]
							\zxN{} \arrow[dr, bend right] & & \zxN{} \arrow[dl, bend left]\\
							& \zxN{} \arrow[dd] & \\
							& & \\
							& \zxN{} \arrow[dl, bend right] \arrow[dr, bend left] & \\
							\zxN{} & & \zxN{}
						\end{ZX}
						\quad=\quad
						\begin{ZX}[]
							\zxN{} \arrow[d] & & & \zxN{} \arrow[d] & \\
							\zxN{} \arrow[d] & & & \zxN{} \arrow[dl, bend right] \arrow[dr, bend left] & \\
							\zxN{} \arrow[dr, bend right] & & \zxN{} \arrow[dl, bend left] & & \zxN{} \arrow[d]\\
							& \zxN{} \arrow[d] & & & \zxN{} \arrow[d]\\
							& \zxN{} & & & \zxN{}
						\end{ZX}
					\end{equation*}
					hold.
				\end{definition}
				\begin{definition}
					A pair
					$$
					(A,m_{\textcolor{red}{\bullet}},\eta_{\textcolor{red}{\bullet}},\Delta_{\textcolor{red}{\bullet}},\varepsilon_{\textcolor{red}{\bullet}})\qquad,\qquad
					(A,m_{\textcolor{green}{\bullet}},\eta_{\textcolor{green}{\bullet}},\Delta_{\textcolor{green}{\bullet}},\varepsilon_{\textcolor{green}{\bullet}})
					$$
					of $F$-algebras on the same vector space $A$ is called an \textit{$F$-bialgebra} if
					$$
					(A,m_{\textcolor{red}{\bullet}},\eta_{\textcolor{red}{\bullet}},\Delta_{\textcolor{green}{\bullet}},\varepsilon_{\textcolor{green}{\bullet}})\qquad,\qquad
					(A,m_{\textcolor{green}{\bullet}},\eta_{\textcolor{green}{\bullet}},\Delta_{\textcolor{red}{\bullet}},\varepsilon_{\textcolor{red}{\bullet}})
					$$
					are bialgebras. The pair is called an \textit{$F$-Hopf algebra} if the latter are Hopf algebras. 
				\end{definition}
				
				\begin{example}
					Let $G$ be a finite group with neutral element $e\in G$. Then $A=\Bbbk G$ is an $F$-Hopf algebra with
					$$
					m_{\textcolor{red}{\bullet}}(g\otimes h)=gh,\qquad
					\eta_{\textcolor{red}{\bullet}}(1_\Bbbk)=e,\qquad
					\Delta_{\textcolor{red}{\bullet}}(g)=\sum_{h,l\in G, hl=g}h\otimes l,\qquad
					\varepsilon_{\textcolor{red}{\bullet}}(g)=\delta_{e,g}
					$$
					and
					$$
					m_{\textcolor{green}{\bullet}}(g\otimes h)=\delta_{g,h}g,\qquad
					\eta_{\textcolor{green}{\bullet}}(1_\Bbbk)=\sum_{g\in G}g,\qquad
					\Delta_{\textcolor{green}{\bullet}}(g)=g\otimes g,\qquad
					\varepsilon_{\textcolor{green}{\bullet}}(g)=1.
					$$
					Namely, $(A,m_{\textcolor{red}{\bullet}},\eta_{\textcolor{red}{\bullet}},\Delta_{\textcolor{green}{\bullet}},\varepsilon_{\textcolor{green}{\bullet}})$ is the Hopf algebra from Example \ref{ExKG} with antipode $S(g)=g^{-1}$, while $(A,m_{\textcolor{green}{\bullet}},\eta_{\textcolor{green}{\bullet}},\Delta_{\textcolor{red}{\bullet}},\varepsilon_{\textcolor{red}{\bullet}})$ is a bialgebra because, using the notation $m_{\textcolor{green}{\bullet}}(g\otimes h)=:g\textcolor{green}{\bullet}h$,
					\begin{align*}
						\Delta_{\textcolor{red}{\bullet}}(g)\textcolor{green}{\bullet}\Delta_{\textcolor{red}{\bullet}}(h)
						&=\sum_{k,k',n,n'\in G, kn=g,k'n'=h}k\textcolor{green}{\bullet}k'\otimes l\textcolor{green}{\bullet}l'\\
						&=\sum_{k,l\in G, kl=g}k\otimes l\\
						&=\Delta_{\textcolor{red}{\bullet}}(g\textcolor{green}{\bullet}h)
					\end{align*}
					and
					$$
					\varepsilon_{\textcolor{red}{\bullet}}(g)\varepsilon_{\textcolor{red}{\bullet}}(h)
					=\delta_{e,g}\delta_{e,h}
					=\delta_{e,g}\delta_{g,h}
					=\varepsilon_{\textcolor{red}{\bullet}}(g\textcolor{green}{\bullet}h).
					$$
					follow. The latter is endowed with the antipode $S(g)=g^{-1}$.
				\end{example}
				The interesting example of quantum $sl(2)$ can be found in \cite[Ex. 2.11]{MajidZX}.

\end{document}